\providecommand\hyper@newdestlabel[2]{}
\providecommand\HyperFirstAtBeginDocument{\AtBeginDocument}
\global\let\oldcontentsline\contentsline
\gdef\contentsline#1#2#3#4{\oldcontentsline{#1}{#2}{#3}}
\global\let\oldnewlabel\newlabel
\gdef\newlabel#1#2{\newlabelxx{#1}#2}
\gdef\newlabelxx#1#2#3#4#5#6{\oldnewlabel{#1}{{#2}{#3}}}
\let\contentsline\oldcontentsline
\let\newlabel\oldnewlabel
\global\let\hyper@last\relax 
\gdef\HyperFirstAtBeginDocument#1{#1}
\providecommand\HyField@AuxAddToFields[1]{}
\providecommand\HyField@AuxAddToCoFields[2]{}
\definecolor{darkred}  {rgb}{0.5,0,0}
\definecolor{darkblue} {rgb}{0,0,0.5}
\definecolor{darkgreen}{rgb}{0,0.5,0}
\newcommand*{\frightarrowfill@}{
  \arrowfill@\relbar\relbar\chemarrow}
\renewcommand{\xrightarrow}[2][]{\ext@arrow 0369\frightarrowfill@{\vspace{-1mm}#1}{#2}}
\newtheorem{theorem}{Theorem}
\newtheorem{main}{Main Result}
\newtheorem*{main1*}{Main Result 1}
\newtheorem*{main2*}{Main Result 2}
\newtheorem{lemma}{Lemma}
\newtheorem{corollary}{Corollary}
\newtheorem{definition}{Definition}
\newtheorem{example}{Example}
\newcommand\word[1]{\mathbf{#1}}
\newcommand{\mc}[1]{\mathcal{#1}}
\newcommand{\A}{\mc{A}}
\newcommand{\B}{\mc{B}}
\newcommand{\C}{\mc{C}}
\newcommand{\D}{\mc{D}}
\newcommand{\F}{\mc{F}}
\newcommand{\Q}{\mc{Q}}
\newcommand{\T}{\mc{T}}
\newcommand{\W}{\mc{W}}
\newcommand{\X}{\mc{X}}
\newcommand{\Y}{\mc{Y}}
\newcommand{\id}{\textrm{id}}
\newcommand{\rt}{\textrm{rt}}
\newcommand{\val}{\textrm{val}}
\newcommand{\valA}{\val_{\mathcal A}}
\newcommand{\stable}{freezable}
\begin{document}

\title{Memory effects can make the transmission capability of a communication channel uncomputable}


\author{David Elkouss}
\email{Correspondence and requests for materials should be addressed to D.E. (email: d.elkousscoronas@tudelft.nl)}
\affiliation{QuTech, Delft University of Technology, Lorentzweg 1, 2628 CJ Delft, Netherlands}
\author{David P\'erez-Garc\'ia}
\affiliation{Departamento de An\'alisis Matem\'atico and Instituto de Matem\'atica Interdisciplinar, Universidad Complutense de Madrid, 28040 Madrid, Spain}
\affiliation{ICMAT, c/ Nicol\'as Cabrera, Campus de Cantoblanco, 28049 Madrid, Spain}


\maketitle

{\bf 
Most communication channels are subjected to noise. One of the goals of Information Theory is to add redundancy in the transmission of information so that the information is transmitted reliably and the amount of information transmitted through the channel is as large as possible. The maximum rate at which reliable transmission is possible is called the capacity.
If the channel does not keep memory of its past, the capacity is given by a simple optimization problem and can be efficiently computed. 
The situation of channels with memory is less clear. Here we show that for channels with memory the capacity cannot be computed to within precision ${1}/{5}$. Our result holds even if we consider one of the simplest families of such channels -information-stable finite state machine channels-, restrict the input and output of the channel to $4$ and $1$ bit respectively and allow $6$ bits of memory.
}

\section*{Introduction}
The need to manipulate large amounts of information is one of the main characteristics of our society. It is crucial to protect the information against noise and errors in order to ensure its reliable transmission and long term storage. It is important also to do so in the optimal way so that communication channels transmit and memories store trustworthily as much information as possible. This problem motivated Shannon, already in 1948, to develop the theory of communications \cite{Shannon_48}. The natural problem that Shannon posed is, given a noisy communication channel, find the maximum rate of information it can transmit with an arbitrarily small error.

In an ingenuity tour de force, he proved that for  channels that keep no memory of their past uses (called memoryless), this quantity -the capacity of the channel- defined in such operational way, has a simple entropic expression. It coincides with the maximization, on all inputs to the channel, of the so-called mutual information between input and output in one single use of the channel. This coding result was complemented  years later by the Blahut-Arimoto (BA) algorithm \cite{Blahut_72,Arimoto_72},  which allows to efficiently approximate the capacity of any memoryless channel within any desired precision. 

The situation for channels with memory is less clear. Regarding coding theorems, more and more general classes of channels were successfully dealt with \cite{Dobrushin_63,Ahlswede_68,Winkelbauer_71,Kieffer_74} culminating in the generalized capacity formula \cite{Verdu_94}. In this last work, Verdu and Han derived a generalization of Shannon's coding theorem which essentially makes no assumption regarding the structure of the channel. 
When it comes to algorithms that approximate the capacity, despite considerable effort, the situation is nowadays less successful. Even if we restrict to the simplest case of channels with memory, the so-called finite state machine channels (FSMCs), the problem remains open. There is a rich literature dealing with particular cases (see e.g.  \cite{Gallager_68, Mushkin_89, Goldsmith_96, Pfister_01, Arnold_06, Sharma_01, Kavcic_01, Holliday_06, Pfister_10, Vontobel_08, Han_15}). However, these results do not address FSMCs in full generality or can not guarantee the precision of the result. 

It is the main aim of this work to show that an algorithm that computes approximately the capacity of an arbitrary FSMC cannot exist.  Since computable functions are exactly those that can be computed by an algorithm, this is equivalent to show that any function that approximates sufficiently the capacity of any FSMC must necessarily be uncomputable. 

\section*{Results}
\subsection*{Notation and main statements}
Aiming at an impossibility result, the simpler the family of channels we consider, the stronger the result. This is why we consider FSMCs. The same result hence holds true for any more general family of channels with memory.

In order to be precise, a FSMC with $n$ possible input symbols (the number of possible output symbols will be always $2$) and $m$ possible states in the memory is determined by \cite{Gallager_68} a set of conditional probability assignments. 
The set of conditional probability assignments $p(y,s|x,s')$  describes the probability of output symbol $y$ and transition to state $s$ in the memory if the FSMC is in state $s'$ and gets $x$ as input. Moreover, we will only consider FSMCs in which the initial state is fixed and known to the sender and receiver. We denote the initial state by $s_0$.

To avoid problems of approximating $p(y,s|x,s')$ we will only consider FSMCs for which the probability assignments $p(y,s|x,s')$ are rational numbers. Moreover, we will only consider FSMCs for which $p(y,s|x,s')$ are in product form $p(y|x,s')p(s|x,s')$ and which are information stable. Information stable channels are one of the simplest classes of channels with memory. For these channels the capacity is given by the limit of the mutual information rate \cite{Hu_64} and it is not necessary to consider the most general capacity formula \cite{Verdu_94}.

Our main result can then be stated as: 

\begin{main}\label{main1}
Any function that on input the set of probability assignments $\{p(y|x,s'), p(s|x,s')\}_{s,y,x,s'}$ of an information stable FSMC $\mathbf{N}$ with $10$ input symbols and $62$ states, outputs a rational number $c$ so that the capacity of $\mathbf{N}$ verifies
\begin{equation}
\left|C(\mathbf{N})-c\right|\le \frac{1}{5}\; ,
\end{equation}
must be uncomputable.
\end{main}

It is obvious that the same result then holds for $n$ input symbols and $m$ states as long as $n\ge 10$ and $m\ge 62$. For example, taking $n=16$ and $m=64$ we get a channel with $4$ bits of input, $1$ bit of output and $6$ bits of memory.

Indeed, we will prove something slightly stronger. Let us recall that a decision problem can be cast as a function with values in $\{0,1\}$, where $1$ stands for accept and $0$ for reject. When the associated function is uncomputable, the decision problem is called undecidable. 

Fix a rational number $\lambda\in (0,1]$. We will give explicitly a subfamily  $\mathcal{S}_{\lambda}$ of FSMCs (information stable and with rational conditional probability assignments in product form) with $10$ input symbols and $62$ states, with the additional property that all channels $\mathbf{N}\in\mathcal{S}_{\lambda}$ have capacity $\ge \lambda$ or $\le \lambda/2$.  

\begin{main}\label{main2}
It is undecidable to know whether  $\mathbf{N}\in \mathcal{S}_{\lambda}$, given by its set of probability assignments $\{p(y|x,s'), p(s|x,s')\}_{s,y,x,s'}$, has capacity $\ge \lambda$ or $\mathbf{N}\le \lambda/2$.
\end{main}

It is clear that if we consider our Main Result \ref{main2} for $\mathcal{S}_{1}$ we get Main Result \ref{main1}. 
That is, if we could approximate the capacity within error $1/5$ then, given a channel from $\mathcal{S}_1$ for which we know its capacity is $\le 1/2$ or $\ge 1$, we could decide which is the case. However, we know by Main Result \ref{main2} that the problem is undecidable.

\subsection*{Proof sketch}
\label{sec:proofsketch}

\

The idea behind our proof is to construct a family of channels such that the capacity of a channel in the family is related to some property of a probabilistic finite automaton (PFA).
Our construction is indirect, we first give a map form PFAs to FSMCs; then we define the channel family as the set of FSMCs that are the image of a PFA via this map.
The important property of this map, proved in Theorem \ref{maintheorem2}, is that the capacity of a channel in the image set is given by the value of its preimage PFA (see the PFA section). 
We now sketch the structure of the proof and point to the appropriate sections for further detail.

FSMCs, defined in \ref{sec:fsmc}, are controlled by a finite state machine. The state of the finite state machine determines the (memoryless) channel that is applied to the input. Then depending both on the input and the current state it transitions probabilistically to the next state. A PFA is a finite state machine that transitions probabilistically from state to state depending on the current state and the input (see Figure \ref{fig:exrubik}). Hence, it is possible to identify the finite state machine controlling a FSMC with a PFA.

A concrete input into an PFA, that is a sequence of input symbols, is accepted if after reading the input the PFA ends in a subset of the states called accepting states, otherwise the input is rejected.  Informally, the value of a PFA is the maximum probability of ending in an accepting state. It turns out that many decision problems related to the value can not be solved. Notably, given some value $\lambda\in(0,1)$ and a PFA $\A$ it is undecidable to know if the value of $\A$ is greater than $\lambda$ \cite{Ginsburg_66,Paz_71,Condon_89}. Here, we use a recent proof of this result by Hirvensalo \cite{Hirvensalo_07}, see Theorem \ref{th:emproblem} in  \ref{subsec:hirvensalo}.  
In order to prove a result about approximations, we amplify this result about decision problems with a very original PFA construction by Gimbert and Oualhadj \cite{Gimbert_10}, which is the key ingredient in our proof, see Lemma \ref{lem:construction} in  \ref{subsec:gimbert}. With this construction, it is possible to embed any PFA $\A$ into a larger PFA $\B_\lambda$ (with $\lambda\in[0,1/2]$) such that: the value of $\B_\lambda$ is $\leq \lambda$ if and only if the value of $\A$ is $\leq 1/2$ and the value of $\B_\lambda$ is $2\lambda$ if and only if the value of $\A$ is $>1/2$. Joining both arguments, we conclude that the value of a PFA can not be approximated with arbitrary precision, since it is undecidable to know whether the value of a PFA is smaller than $\lambda$ or equal to $2\lambda$.

To go from there to Main Result \ref{main2} it is enough to construct for any PFA $\A$ a channel $\mathbf V_\A$  so that the capacity of $\mathbf V_\A$ equals the value of $\A$. The idea for that is very natural:

\

Consider a channel with two input registers. The first one is used to control the PFA. The second one corresponds to the data to be transmitted. If the PFA is in an accepting state the channel outputs the contents of the second input register, that is, it behaves as a noiseless channel. Otherwise it is only noise, i.e. it outputs uniformly at random a symbol from the output alphabet.

\

Intuitively, this map should already have the property that the capacity of $\mathbf V_\A$ equals the value of $\A$. However, without an additional gadget, we cannot conclude this. Let us see with an example why it does not suffice. Consider for instance a PFA that transitions from the initial state to an accepting state with probability $1/2$ and with probability $1/2$ to some other state. Moreover, suppose that these two states are final in the sense that the PFA can not leave them once reached. Such a PFA would have value $1/2$. However, the capacity of the associated channel would be zero because the error probability of any code would always be greater than $1/4$. In order to solve this problem we concatenate the map with a function $\gamma(\cdot)$ from PFAs to PFAs that adds to the PFA a reset and a freeze symbols. The reset symbol takes the PFA back to the initial state while the freeze symbol keeps the state of the PFA unchanged. We prove in Lemma \ref{lem:reslemma} in  \ref{subsec:resandstab}, that the value of an automaton $\A$ does not change under this map, i.e. the value of $\A$ equals the value of $\gamma(\A)$. But, for PFAs with the additional reset and freeze symbols we can show the desired result that the capacity of the channel $\mathbf V_{\gamma(\A)}$ equals the value of the automaton $\A$. This is our main technical result, proved in Theorem \ref{maintheorem2}.

The intuition between the equality of the capacity of channel $\mathbf V_{{\mathcal{A}}}$ and the value of $\mathcal A$, $\valA$, is as follows. 
For any $\delta>0$ there exists a word 
with value greater than $\valA-\delta$. 
By feeding this word into the control register, the channel will transition into a final state with probability at least $\valA-\delta$. 
The state of the channel can then be frozen making the mutual information rate tend to $\valA-\delta$.
However, this rate might not be achievable. In order to show achievability then, we induce a memoryless channel by choosing for the control input a periodic sequence that ends with a reset symbol.
More concretely, for $\delta>0$, the sequence consists of: a word with a value larger than $\valA-\delta$, a number of freeze symbols that guarantee an information rate larger than $\valA-2\delta$ and a reset symbol.
In the other direction, one would not expect a capacity larger than $\valA$. The reason is that the channel outputs a symbol uniformly at random when it is in a non-final state and this happens with probability at least $1-\valA$.

Finally, note that at this point we do not know yet that the channel is information stable. Indeed, the proof of this fact (Corollary \ref{cor:infest} in  \ref{Appendix:inf-stable}) will use crucially Theorem \ref{maintheorem2}.

\subsection*{Formal statements of the main results}\label{sec:tm}

\

So far we have introduced the notion of uncomputable functions as those that can not be computed with an algorithm (similarly the notion of undecidable problems). In order to make this definition, and hence the Main Results, mathematically rigorous, we have to recall the definition of a Turing Machine (TM) as the formal definition of what an {\it algorithm} is. For more details one can consult for instance \cite{Sipser_06,Arora_09}.

A TM represents a machine with a finite set of states that can read from and write to an infinitely long memory in the form of a tape. The tape is divided into cells that can hold a single symbol from a finite alphabet. Initially, the tape contains some arbitrary but finite string that we call the input followed by an infinite sequence of blank symbols. 
The operation of the machine is controlled by a head that sits on top of a cell of the tape. The head operates as follows: it reads the symbol below it; then, depending on the symbol and the current state it writes a symbol, moves left or right and transitions to a new state. The set of states includes the halting state. 
The TM halts after it transitions to the halting state. 
The output of the TM consists of the, possibly empty, string of symbols starting from the leftmost non-blank symbol to the rightmost non-blank symbol. 

Formally, a TM is defined by a triple $M=(Q,\Sigma,\delta)$ where $Q$ represents the finite set of states including an initial and a halting state, $\Sigma$ is the finite set of symbols that a cell may contain and it includes the blank symbol and $\delta:(Q\times\Sigma)\mapsto(Q\times\Sigma\times\{L,R\})$ is the transition function.  

A configuration is a complete description of the status of a TM. It consists of 
the current state, 
the contents of the tape and 
the position of the head. In the initial configuration, the tape contains the input string and the head of the TM is in the initial state and situated on top of the leftmost cell of the input. Once the initial configuration is fixed a TM evolves deterministically and may or may not eventually halt. 

Let us fix $n=10$, $m= 62$. In order to specify a FSMC with $n$ input symbols and $m$ states, it is enough to give $N=nm(2+m)=39680$ rational numbers corresponding to the conditional probability assignments. 
It is very easy to construct an injective map $\sigma(\cdot)$ from vectors of $N$ positive rational numbers to the natural numbers (see  \ref{sec:pfaencod}), which then can be transformed into a valid input of a TM. For instance, it would be transformed into a string of zeroes and ones if $\Sigma=\{0,1,\#\}$. 
Main Results  \ref{main1} and \ref{main2} can be then respectively restated as:

\begin{main1*}
There does not exist any TM that  halts on all inputs of the form $\sigma(\mathbf{N})$ for $\mathbf{N}\in\mathcal S_1$ and outputs a rational number $c$ such that the capacity of $\mathbf{N}$ verifies
\begin{equation}
\left|C(\mathbf{N})-c\right|\le \frac{1}{5}\; .
\end{equation}
\end{main1*}

\begin{main2*}
There does not exist any TM that halts on all inputs of the form $\sigma(\mathbf{N})$ for $\mathbf{N}\in \mathcal{S}_{\lambda}$ and outputs $1$ if the capacity of $\mathbf{N}\ge \lambda$ and $0$ if the capacity of $\mathbf{N}\le \lambda/2$.
\end{main2*}

\begin{figure*}
\includegraphics[width=175mm]{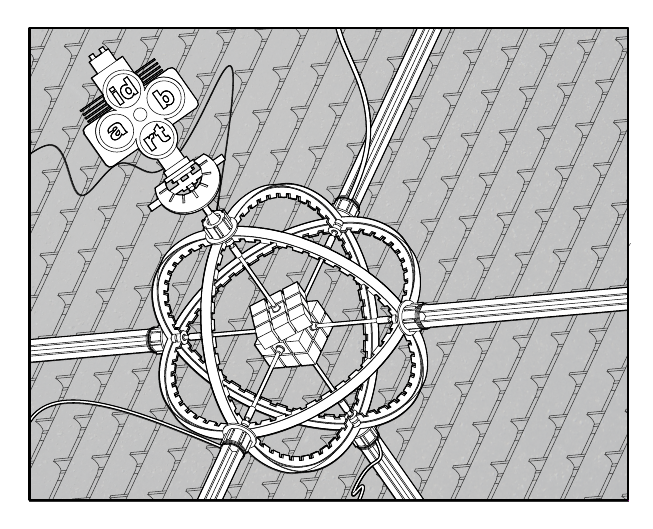}
\caption{A noisy Rubik cube solver as an example of a Probabilistic Finite Automaton (PFA)\\ This PFA has as many states as different Rubik cube configurations. It begins in some predefined state and can be manipulated with four different buttons or input alphabet symbols: \{$a,b,\textrm{id},\textrm{rt}$\}. A Rubik cube can be solved by combinations of only two sequences of rotations \cite{singmaster1981notes}. The press of the buttons $a,b$ will, with some probability, implement one of these two sequences and with the complementary probability apply a random rotation. The buttons $\textrm{id},\textrm{rt}$ will make the state of the Rubik cube either stay idle or bring it back to the initial state. The accepting state is the solved configuration of the cube. The value of this automaton would be the maximum probability of taking the initial configuration to the solved configuration by pressing a sequence of buttons. (Credit: Francisco Garc\'ia Moro)}\label{fig:exrubik}
\end{figure*}

\
\section*{Discussion}
\label{sec:disc}

We have proven that no algorithm can exist that approximates the capacity for all information stable 
FSMCs to any desired precision. 
 
Our construction builds directly on top of several strong undecidability results of PFAs. Recent developments underlying these results suggest that it should be possible to reduce the dimensions of our construction \cite{Neary_15}. It is an interesting problem to find the minimal dimensions for which uncomputability holds.

It is important to notice also that the channels appearing in our construction have long term memory. Combined with the known results for memoryless channels, this suggests the existence of a tradeoff between the time-scale of the memory of a channel and the efficiency to compute its capacity. Giving precise quantitative bounds in this direction is an interesting open question.

It is also worth exploring other problems that could be attacked with similar techniques. The proof technique  
can be extended to the capacities of quantum channels with memory implying an even stronger inapproximability result in that case. We will make the explicit analysis in a forthcoming paper. Similar long term memory effects appear in other interesting situations, associated with other entropic quantities. One paradigmatic example is cryptography, where in order to analyze the security of the sequential use of a device, one needs to assume the worst case-scenario in which the adversary keeps memory of its past uses. Both in the classical and in the quantum case, the techniques of this paper could provide insights on the difficulty to provide optimal results in cryptographic settings. 

Furthermore, our result connects with recent work regarding the different capacities of memoryless quantum channels \cite{Cubitt_15,Elkouss_15}, showing some evidence that these capacities might be uncomputable. Also, memoryless zero error capacities, both classical and quantum, are known to have highly non-trivial behaviour \cite{Alon_98,Chen_11,Cubitt_11,Cubitt_12,Shirokov_14}. Unfortunately, the techniques used here exploit directly the memory of the channel and hence cannot be directly applied to the memoryless capacities. The question is however of unquestionable interest.

\
\section*{Methods}
\subsection*{Notation}
\label{sec:notation}

We denote random variables by capital letters $X,Y,...$, sets and probabilistic finite automata (PFA) -see below for the definition- by calligraphic capital letters $\mathcal X, \mathcal Y, ...$, channels by capital bold face letters $\mathbf X, \mathbf Y, ...$, and instances of random variables by lower case letters $x, y, ...$. We denote vectors with the same convention, whenever confusion might arise a superscript indicates the number of components of the vector and a subscript the concrete component: $X^n=(X_1,X_2,...,X_n)$ or $x^n=(x_1,x_2,...,x_n)$. We indicate a consecutive subset of $n$ components of the vector with subscript notation $[a,a+n-1]$: $x_{[a,a+n-1]}=(x_a,x_{a+1},...,x_{a+n-2},x_{a+n-1})$. 

A vector is called a probability vector if all its entries are non-negative and add up to one. 
A matrix is called a stochastic matrix if all its columns are probability vectors. A stochastic matrix takes probability vectors to probability vectors.

\subsection*{Probabilistic finite automata}
\label{sec:pfa}

\

A PFA consists of a finite set of inputs and a finite set of states. One of these states is the initial state and a subset of the states are accepting states. 

The action of the PFA is defined by the transition probabilities from one state to another as a function of the input symbols. A word is a sequence of symbols. After a word is fed to a PFA in the initial state, the PFA will transition from state to state and will end up in an accepting state with some probability. We call this probability the accepting probability of a word. Intuitively, we can understand a PFA as a machine with noisy knobs, the input symbols, and the input word is a sequence of knobs that tries to steer the machine into some desired state. See Figure \ref{fig:exrubik} for an example and \ref{sec:pfaap} for a formal definition.

Given some PFA, we denote by $\valA$ the supremum of the acceptance probabilities over all input words:
\begin{equation}
\valA=\sup_{\word{w}}
\val(\A,\word{w})
\end{equation}
where $\val(\A,\word{w})$ denotes the value of $\word{w}$ when input into the PFA $\A$ and the optimization runs over all words of finite length.

We consider two types of PFAs that we name as {\it \stable}\ and {\it resettable}. 

We call a PFA a \stable\ PFA if one of the transition matrices is equal to the identity matrix $\mathcal X_{\id}$. The reason is that for such a PFA reading the symbol corresponding to the identity leaves the state probabilities unchanged. Let $u$ be any probability vector, then
\begin{align}
u &= \mathcal X_{\id} u 
\label{eq:id}             
\end{align}

We call a PFA a resettable PFA if one of the transition matrices, $\mathcal X_\rt$, takes the state back to the initial state. Let $u$ be any probability vector, then
\begin{align}
v &= \mathcal X_\rt u 
\label{eq:reset}             
\end{align}

We let $\gamma$ be a map from PFAs to PFAs such that for all PFA $\A$, $\gamma(\A)$ is \stable\ and resettable. More concretely:
\begin{definition}\label{def-gamma}
Given a PFA $\A=\{\Q,\W,\X,v,\F\}$, we define $\gamma(\A)=\{\Q,\W\cup\{\id,\rt\},\X\cup\{\X_\id,\X_\rt\},v,\F\}$ as an automaton that extends $\A$ with the two additional input symbols $\id$ and $\rt$ and the corresponding matrices $\mathcal X_\id$ and $\mathcal X_\rt$ as given by \eqref{eq:id} and \eqref{eq:reset}. 
\end{definition}

The key lemma we will need about PFAs, essentially due to Gimbert and Oualhadj \cite{Gimbert_10}, is the fact that their value cannot be approximated within a constant error. Let us give the precise statement. The complete proof can be found in  \ref{proofs}. Fix a rational number $\lambda\in (0,1]$. 
\begin{lemma}\label{mainlemma1}
One can give explicitly a subfamily $\mathcal{T}_{\lambda}$ of rational \stable\ and resettable PFA with alphabet size $5$ and $62$ states with the following properties:
\begin{enumerate}
\item[(i)] $\valA$ is either $\ge \lambda$ or $\le \lambda/2$ for all $\mathcal{A}\in \mathcal{T}_{\lambda}$.
\item[(ii)] It is undecidable to know which is the case.
\end{enumerate}
\end{lemma}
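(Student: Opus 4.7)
My plan is to reduce from a known undecidable decision problem about probabilistic finite automata. Classical results, originating with Paz and sharpened by Condon--Lipton and Blondel--Canterini, establish that given an unrestricted rational PFA it is undecidable to decide whether its value exceeds a prescribed rational threshold, even under the promise that the value is bounded away from that threshold on both sides (an ``isolated cut-point'' instance). I would start from such an explicit construction and then massage it through three transformations: a gap-shaping step that forces the gap to land on $[\lambda/2,\lambda]$ for the prescribed rational $\lambda$, an application of the map $\gamma$ from Definition~\ref{def-gamma} that makes the automaton \stable\ and resettable by adjoining the two input letters $\id$ and $\rt$, and a careful bookkeeping step to fit the final automaton into alphabet size $5$ and exactly $62$ states.

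First I would pick an undecidable source problem whose natural encoding as a PFA is as compact as possible; a standard choice is the halting problem for a universal two-counter machine, simulated by a PFA that stochastically ``guesses'' an accepting computation and then verifies the guess by a sequence of consistency checks whose cumulative success probability is bounded away from zero precisely when the underlying machine halts. With this construction in hand, the value of the PFA equals some rational $p_{\mathrm{yes}}$ on halting instances and is at most some rational $p_{\mathrm{no}}<p_{\mathrm{yes}}$ on non-halting instances.

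Next I would perform gap shaping: by taking a convex combination of each transition matrix with a fixed ``drain'' matrix that deterministically pushes probability mass toward or away from the accepting set, one can linearly rescale the value of the PFA, turning $(p_{\mathrm{yes}},p_{\mathrm{no}})$ into any desired $(\lambda,\lambda/2)$ while keeping the state set unchanged. Applying $\gamma$ then adds exactly two input letters without altering the value of the PFA: the $\id$ letter can always be deleted from an input word since it acts as the identity, and the prefix of an input word ending at the last $\rt$ letter is irrelevant since $\rt$ resets the state distribution to the initial one. Thus the value of $\gamma(\A)$ equals that of $\A$, and properties (i) and (ii) are preserved.

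The main obstacle is not the existence of the reduction but fitting the construction within exactly $5$ input letters and $62$ states; the three non-$\id$, non-$\rt$ letters must suffice to drive the automaton through all steps of the simulated universal device, and the $62$-state budget forces the amplification and verification gadgets to share states with the core simulation. I expect this explicit accounting, which the paper defers to Section~\ref{sec:pfacaps}, to be where most of the work lies, with the general undecidability argument and the transformation $\gamma$ providing the conceptual skeleton.
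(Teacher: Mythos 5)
Your overall skeleton (start from an undecidable PFA problem, make the automaton \stable\ and resettable via $\gamma$, then do bookkeeping) matches the paper, and your treatment of $\gamma$ is exactly right. But the step you dismiss as routine ``gap shaping'' is where your argument breaks, and it is precisely where the paper's real work lies. First, the mechanism you describe --- taking a convex combination of \emph{each transition matrix} with a drain matrix --- does not linearly rescale the value: perturbing every step compounds with the word length, so the acceptance probability of a word of length $\ell$ is distorted by a factor depending on $\ell$, not by a fixed affine map. The correct way to rescale affinely is to branch only once, at the start (with probability $p$ enter the original automaton, with probability $1-p$ fall into a sink that is either accepting or not); this is what the paper's automata $\B_p$ and $\C_p$ do, and it costs two extra states, giving $\val(\B_p,\word{w})=p\,\val(\A,\word{w})$ and $\val(\C_p,\word{w})=p\,\val(\A,\word{w})+1-p$. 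Second, even the corrected affine rescaling cannot deliver the target gap for all $\lambda\in(0,1]$: to map a promise gap $(p_{\mathrm{yes}},p_{\mathrm{no}})$ onto $(\lambda,\lambda/2)$ by $v\mapsto av+b$ with $0\le a,\ a+b\le 1$ forces $a(p_{\mathrm{yes}}-p_{\mathrm{no}})\ge\lambda/2$, hence $p_{\mathrm{yes}}-p_{\mathrm{no}}\ge\lambda/2$; for $\lambda=1$ (the case that yields Main Result 1) you would need an initial gap of at least $1/2$, which the compact explicit constructions do not provide.

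The paper avoids this entirely by \emph{not} starting from a promise/isolated-cut-point problem. It starts from Hirvensalo's undecidability of the bare emptiness problem for $L_{\A>\delta}$ (alphabet $2$, $25$ states, tiny threshold $\delta$), uses $\B_p,\C_p$ only to move the \emph{threshold} to $1/2$ (27 states), and then manufactures the constant multiplicative gap with the Gimbert--Oualhadj automaton $\D_{\A,y}$: two interacting copies of $\A$ arranged so that the value is $\ge 2y$ if some word exceeds acceptance probability $1/2$ and $\le y$ otherwise, no matter how slightly the threshold is exceeded. Proving this (Lemma~\ref{lem:gimbert5}) is the technical heart --- it requires the carefully chosen word lengths $n_k=\lceil\log_x\frac1k+C_\epsilon\rceil$ and the bound $\zeta(b)\le b/(b-1)$ --- and it is also what produces the concrete counts ($2\times 27$ states plus auxiliary ones $=62$, alphabet $\{a,b,c\}$ plus the binary alphabet of $\A$). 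Your plan of directly simulating a universal two-counter machine and then ``sharing states'' to hit $62$ has no realistic route to these bounds. So the gap is genuine: you are missing the value-amplification gadget that converts an ungapped emptiness problem into a factor-$2$ value gap.
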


The definition of $\mathcal{T}_{\lambda}$ will be given in  \ref{proofs}, Eq. (\ref{def-family-T}). 

\subsection*{The family $\mathcal{S}_{\lambda}$ and the proof of Main Result \ref{main2}}
\label{sec:pfacaps}

\

Given a \stable\ and resettable PFA $\A$ we define the channel $\mathbf V_{{\mathcal{A}}}$ as follows. The input alphabet of the channel takes values in $\{0,1\}\times\mathcal W$, which we identify with two different input registers: a data input and a control input. 
The data input is transmitted to the output: noiselessly if $\A$ is in an accepting state or, if $\A$ in any other state, the channel outputs uniformly at random an element of the output alphabet. More concretely, the output of the channel is defined by the following conditional probability:
\begin{align}
\label{eq:pfatoch1}
p(y_n|x_n,s_{n-1}) &=\left\{
        \begin{aligned}      
        &\left. \frac{1}{2}\right.  &\textrm{if }s_{n-1}\notin\F\\
        &\left. 1\right. &\textrm{if }s_{n-1}\in\F\textrm{ and }y_n=x_n\\
        &\left. 0\right. &\textrm{else}
       \end{aligned}
\right.
\end{align}

The control input is fed to $\A$, which begins in the initial state, and the state transition probabilities are dictated by the PFA:
\begin{equation}
p(s_n|c_n,s_{n-1}) = \left\langle\pi_{s_n}, \mathcal X_{c_n} \pi_{s_{n-1}}\right\rangle\label{eq:pfastate}\ .
\end{equation}

We connect the properties of PFA with the capacity of FSMCs in the next Theorem:
\begin{theorem}\label{maintheorem2}
The capacity of $\mathbf V_{{\mathcal{A}}}$ is given by:
\begin{align}
C(\mathbf V_{{\mathcal{A}}}) &= \valA
\end{align}
\end{theorem}
We defer the proof to  \ref{sec:proof-lemma2}. 

The family $\mathcal{S}_{\lambda}$ in Main Result \ref{main2} is defined simply as
$$\mathcal{S}_{\lambda}=\{\mathbf V_{{\mathcal{A}}}: \mathcal{A}\in \mathcal{T}_{\lambda}\}\;.$$
with $\mathcal{T}_{\lambda}$ the family introduced in Lemma \ref{mainlemma1} and defined in  \ref{proofs}, Eq. (\ref{def-family-T}).

Main Result \ref{main2} is then a trivial consequence of Lemma \ref{mainlemma1} and Theorem \ref{maintheorem2}.

Furthermore one can leverage Theorem \ref{maintheorem2} to show that all the channels in $\mathcal{S}_{\lambda}$ are information stable:
\begin{corollary}
\label{cor:infest}
Given $\mathbf V_{\mathcal A}\in\mathcal S_{\lambda}$, $\mathbf V_{\mathcal A}$ is information stable.
\end{corollary}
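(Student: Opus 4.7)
The plan is to exhibit, for every $\gamma>0$, a block-i.i.d.\ input whose per-symbol information density concentrates on a value within $\gamma\val_{\A}$ of $C(\mathbf V_{\A})=\val_{\A}$; the information stability condition then follows from the strong law of large numbers together with the fact, from Lemma \ref{mainlemma2}, that $C_n\to\val_{\A}$.

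I first dispose of the degenerate case $\val_{\A}=0$: there no input word drives the PFA into $\F$ with positive probability, so the channel output is uniform and independent of the input, every mutual information vanishes, and information stability holds trivially.

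Assume $\val_{\A}>0$. Fix $\word{w}\in\W^*$ of length $k$ with $\alpha:=\val(\A,\word{w})$ as close to $\val_{\A}$ as desired, and for a parameter $m\ge 1$ build a block of length $L=k+m$ whose control coordinates are $(w_1,\ldots,w_k,\id,\ldots,\id,\rt)$ (with $m-1$ copies of $\id$) and whose data coordinates are $0$ outside positions $k+1,\ldots,k+m-1$ and i.i.d.\ uniform Bernoulli on those positions. Because the initial state $s_0$ is deterministic and $\mc X_\rt$ returns every state distribution to $s_0$, the input-output pairs in consecutive blocks are i.i.d.\ copies of the same random variable $(X^L,Y^L)$. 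A short computation, conditioning on whether $s_k\in\F$, gives $I(X^L;Y^L)\ge \alpha(m-1)-h_2(\alpha)$, where $h_2$ is the binary entropy; hence $I(X^L;Y^L)/L\ge (1-\gamma/2)\val_{\A}$ once $\alpha$ is close enough to $\val_{\A}$ and $m$ is large enough.

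Because consecutive blocks are i.i.d., for $n=jL$ the information density $i_{X^n;Y^n}(X^n,Y^n)$ is a sum of $j$ i.i.d.\ terms of common mean $I(X^L;Y^L)$, and the strong law of large numbers gives $n^{-1}i_{X^n;Y^n}\to I(X^L;Y^L)/L$ almost surely along $n\in L\mathbb N$; the contribution of an incomplete trailing block is bounded (information density on finite alphabets has absolute value bounded by a linear function of the number of symbols), and once divided by $n$ it vanishes, so the convergence holds in probability for all $n$. Since $C_n\to\val_{\A}>0$ by Lemma \ref{mainlemma2}, the ratio $i_{X^n;Y^n}/(nC_n)$ tends in probability to a number in $[1-\gamma/2,\,1]$, forcing $\textrm{Pr}\big[|i_{X^n;Y^n}/(nC_n)-1|>\gamma\big]\to 0$. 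The main obstacle is verifying the true independence of the blocks, which rests on $\mc X_\rt$ mapping every state distribution to the deterministic initial state $s_0$; once this is granted, everything reduces to a routine SLLN computation.
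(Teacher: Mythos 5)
Your construction is clean and the i.i.d.\ block argument (reset symbol makes consecutive blocks genuinely independent, so the information density is a sum of i.i.d.\ bounded terms) is correct. However, there is a real gap that stems from a quantifier issue. Your source depends on $\gamma$: for each $\gamma$ you fix one block length $L$ and one acceptance probability $\alpha<\val_{\A}$, and along that source $i_{X^n;Y^n}/(nC_n)$ converges in probability to the \emph{fixed} number $I(X^L;Y^L)/(L\val_\A)$, which lies strictly below $1$. Consequently, for any $\gamma'<1-I(X^L;Y^L)/(L\val_\A)$ we have $\Pr\bigl[|i_{X^n;Y^n}/(nC_n)-1|>\gamma'\bigr]\to 1$, not $0$. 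So your source does \emph{not} satisfy $i/(nC_n)\to 1$ in probability; it only satisfies the $\gamma$-by-$\gamma$ statement. The paper's definition as typeset has the quantifiers in the order ``for all $\gamma$ there exists a sequence,'' which your argument does match, but the standard Dobrushin/Pinsker notion of information stability (and the version used to justify $C=\lim_n C_n$) requires a \emph{single} input sequence for which the normalized information density concentrates at $1$, i.e.\ ``there exists a sequence such that for all $\gamma$.'' The paper's own proof establishes precisely this stronger statement, by splicing together $m_1$ copies of a quality-$(\val_\A-\delta/2)$ block, then $m_2$ copies of a quality-$(\val_\A-\delta/4)$ block, etc., with $m_t$ tuned so that the running average rate climbs to $\val_\A$ and Hoeffding's inequality still applies to the finite runs of each block type. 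That diagonalization is exactly what is missing from your argument.

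Two smaller remarks. First, once you switch to blocks of improving quality you have only finitely many i.i.d.\ copies of each fixed block, so the SLLN no longer applies off-the-shelf and you need a nonasymptotic tail bound such as Hoeffding (as the paper does), together with a growth condition on $m_t$ (the paper imposes $m_t\ge (n_{t+1})^2$ precisely so that $\sum_i |b_i-a_i|^2\le n^{3/2}=o(n^2)$). Second, your bound $I(X^L;Y^L)\ge \alpha(m-1)-h_2(\alpha)$ is stated a bit optimistically; what the usual computation of the mixture entropy gives cleanly is $I\ge \alpha(m-1)-1$, which is all you need and matches \eqref{eq:hym1n}. The clean identity you want to use holds only when $\alpha$ is in the appropriate range.
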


The proof will be given in  \ref{Appendix:inf-stable}.

\

Data sharing not applicable to this article as no datasets were generated or analysed during the current study.

\section*{Acknowledgements}
The authors are grateful to Toby Cubitt, Jeremy Ribeiro, Eddie Schoute, Henry Pfister, Marco Tomamichel and Michael Wolf as well as to the anonymous referees for stimulating discussions and feedback. We thank Kenneth Goodenough for pointing us to the upper bound on the Riemann zeta function and David Reeb and Tobias Koch for extremely valuable comments that highly improve the presentation of the paper.

DPG acknowledges support from MINECO (grant MTM2014-54240-P), from Comunidad de Madrid (grant QUITEMAD+- CM, ref. S2013/ICE-2801), and the European Research Council (ERC) under the European Union's Horizon 2020 research and innovation programme (grant agreement No 648913). This work has been partially supported by ICMAT Severo Ochoa project SEV-2015-0554 (MINECO). DE acknowledges support from TU Delft Open Access Fund. 



\newpage
\renewcommand{\thesection}{Supplementary Note \arabic{section}}
\renewcommand{\figurename}{Supplementary Figure}
\renewcommand{\bibsection}{\section*{Supplementary References}}
\widetext
\section{PFA}
\label{sec:pfaap}
A PFA ${\A}$ is given by a tuple $\A=(\Q, \W, \X, v, \F)$. $\Q$ denotes a finite set of states, $\W$ denotes a finite input alphabet, $\X$ denotes a finite set of stochastic matrices with cardinality equal to the cardinality of the input alphabet, $v$ denotes an initial probability distribution over $\Q$ and $\F\subseteq \Q$ denotes a set of accepting states. We say that the PFA is {\it rational} if the coefficients of $\X$ and $v$ are rational numbers. We will only consider rational PFA in the sequel.

The action of a PFA is defined by the transition probabilities from one state to another as a function of the input symbols. If the automaton is in the state $q_a$ and reads the letter $w$ it transitions to the state $q_b$ with probability:
\begin{align}
p\left[q_a\xrightarrow{\,w\,}q_b\right] &= (X_w)_{q_b,q_a} \\
                                                          &= \left\langle \pi_{\{q_b\}}, X_w \pi_{\{q_a\}}\right\rangle
\end{align}
where we denote by $\langle a,b\rangle$ the scalar product between vectors $a$ and $b$ and by $\pi_\X$ a vector with ones in the positions indicated by $\X$ and zeroes in the remaining positions. 

We exploit the same notation for the probability that the automaton transitions from the state $q_a$ to the state $q_b$ after reading the word $\word{w}=(w_1,\ldots,w_{|\word{w}|})\in\W^{|\word{w}|}$: 
\begin{equation}
p\left[q_a\xrightarrow{\,\word{w}\,}q_b\right] = \left\langle \pi_{\{q_b\}},  X_{w_{|\word{w}|}}\cdot\ldots\cdot X_{w_1} \pi_{\{q_a\}}\right\rangle\ .
\end{equation}
More generally, if we have a probability distribution over the states given by the column vector $x$ and the PFA reads the letter $w$ then the new distribution over the states is given by $X_w x$. A particularly relevant probability is the probability that the automaton ends in an accepting state after reading some word $\word{w}$. 
We call this probability the probability of accepting $\word{w}$ or the value of $\word{w}$. It can be computed 
\begin{equation}
\val(\A,\word{w})=\left\langle\pi_{\F}, X_{w_{|\word{w}|}}\cdot\ldots\cdot X_{w_1} v\right\rangle \ .
\end{equation}
We call the value of $\A$, which we denote by  $\valA$, 
the supremum 
of the acceptance probabilities over all input words:
\begin{equation}
\valA=\sup_{\word{w}\in\mathcal W^*} \val(\A,\word{w})
\end{equation}
where we denote by $\mathcal W^*$ the set of finite length words in $\mathcal W$.

Whenever possible, we will represent graphically the different automata constructions. We will follow the following conventions. A state is denoted by a circle. An accepting state is denoted by a circle with a double line around it. In all automata in this paper, the initial distribution will have one coefficient with weight one. We indicate the corresponding state with an arrow that does not come from any state. 

We indicate with $\xrightarrow{w,p}$ that if the automaton reads the letter $w$ it transitions from the origin of the arrow to the state pointed by the arrow with probability $p$. In order to avoid clutter, we simplify the notation in several cases.
If we do not show transitions corresponding to all input symbols, the missing transitions correspond to self-loops with probability one. We drop the probability and just write $\xrightarrow{w}$ if a transition occurs with probability one. We drop the input symbol and just write $\xrightarrow{p}$ if all input symbols transition with the same probability. 

\begin{example} Consider the PFA given in \figurename~\ref{fig:exautomata}. 
The automaton in the figure has three states $\Q=\{q_1,q_2,q_3\}$, two input symbols $\W=\{a,b\}$, the initial state is $q_1$ and there is a single accepting state $q_3$. By looking at the figure we can construct the stochastic matrices:
\begin{equation}
X_a=\begin{pmatrix}
  0.5 & 1 & 0 \\
  0.5 & 0 & 0.5 \\
  0    & 0 & 0.5 
 \end{pmatrix},\quad X_b=\begin{pmatrix}
  0 & 0 & 0 \\
  0 & 1 & 0.5 \\
  1 & 0 & 0.5 
 \end{pmatrix}
\end{equation}
Now, assume that we see the word $\word{w}=baa$, we can easily compute its value:
\begin{equation}
\begin{pmatrix}0&0&1\end{pmatrix}X_b\cdot X_a\cdot X_a\begin{pmatrix}1\\0\\0\end{pmatrix}=0.25
\end{equation}
\label{ex:pfa}
\begin{figure}
\centering
{\includegraphics[width=7.5cm]{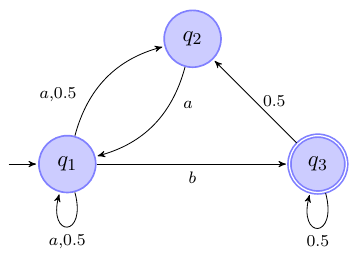}
}
\caption{Automaton with three states $\Q=\{q_1,q_2,q_3\}$, two input symbols $\W=\{a,b\}$, the initial state is $q_1$ and there is a single accepting state $q_3$. 
}\label{fig:exautomata}
\end{figure}
\end{example}

\section{Finite State Machine Channels}\label{sec:fsmc}

\

A channel can depend on past inputs and outcomes in very complicated ways. 
We focus our interest on finite FSMC which is the set of discrete channels that have its behavior dictated by a finite state machine \cite{Gallager_68}. 
Let $\mathcal X$, $\mathcal Y$ and $\mathcal S$ be finite sets that represent the input alphabet, output alphabet and set of states. A FSMC is characterized by the time-invariant conditional probabilities $p(y,s|x,s')$  for all states $s,s' \in\mathcal S$, input symbols $x\in\mathcal X$ and output symbols $y\in\mathcal Y$. These conditional probabilities denote the probability that the channel outputs the symbol $y$ and transitions to the state $s$ given that the channel is in state $s'$ and receives the input symbol $x$. In this paper, we will restrict our attention to those FSMC for which $p(y,s|x,s')$ have a product form $p(y|x,s')p(s|x,s')$.

We assume that the initial state $s_0$ is known by both the transmitter and the receiver. We denote by $W^n_{s_0}$ the sequence of probability distributions induced by $s_0$ that give the probability of a sequence of outputs given a sequence of inputs into the channel:
\begin{align}
W^n_{s_0}\left(y^n|x^n\right)&=\sum_{s_{n}}W^{n}_{s_0}\left(y^{n}s_n|x^{n}\right)
\end{align}
where
\begin{align}
W^n_{s_0}&\left(y^ns_n  |x^n\right)=\nonumber\\
=&\sum_{s_{n-1}}p(y_n,s_n|x_n,s_{n-1})W^{n-1}_{s_0}\left(y_{n-1}s_{n-1}|x_{n-1}\right)
\end{align}
Analogously we can define a sequence of probability distributions to characterize the state of the channel:
\begin{align}
W^n_{s_0}\left(s_n|x^n\right)&=\sum_{y^{n}}W^{n}_{s_0}\left(y^{n}s_n|x^{n}\right)
\end{align}
Note that we have used, abusing the notation, $W^n_{s_0}$ both to define the conditional probability of the output and the state. Consider two random variables $X,Y$ with joint distribution $p(x,y)$, the information spectrum is the distribution of the random variable $i_{X,Y}$ given by:
\begin{equation}
i_{X,Y}(x,y)=\log \frac{p(y|x)}{p(y)}
\end{equation}
The mutual information is the expected value of the information spectrum:
\begin{align}
I(X;Y) &= \langle i_{X,Y}(X,Y)\rangle_{XY}\\
          &= \sum_{x,y} p(x,y) \log \frac{p(y|x)}{p(y)}
\end{align}
A channel is said to be information stable\ \cite{Dobrushin_63} if for all $\gamma>0$ there exists a sequence of random variables $\{X_i\}_{i=1}^\infty$ such that:
\begin{align}
\lim_{n\rightarrow\infty}\textrm{Pr}\left[\left|\frac{i_{X^n;W^n}\left(X^n;Y^n\right)}{nC^n}-1\right|>\gamma\right]=0
\end{align}
where:
\begin{align}
C_n=\sup_{X^n}\frac{1}{n} I(X^n;Y^n)
\end{align}
In full generality one may need to resort to the capacity formula of Verdu and Han \cite{Verdu_94} in order to compute the capacity of a FSMC. However, 
if the channel is information stable then the capacity is given by the mutual information rate \cite{Dobrushin_63}:
 \begin{align}
	 C(\mathbf W) &=\lim_{n\rightarrow\infty}\sup_{X^n} \frac{1}{n}I(X^n;Y^n)
 \end{align}

\section{An encoding for FSMC into Turing machines}
\label{sec:pfaencod}
Let us construct an explicit example of the $\sigma$ function introduced in the main text.

Since the set of FSMC can be seen as a subset of positive elements in $\mathbb{Q}^{N}$, it is enough to give an explicit injective map $\sigma$ from the set of positive elements in $\mathbb{Q}^{N}$ into the natural numbers $\mathbb{N}$. For instance, consider the first $2N$ prime numbers $p_1,\ldots, p_{2N}$ and define $$\sigma\left(\frac{r_1}{s_1},\ldots, \frac{r_{N}}{s_{N}}\right)= \prod_{j=1}^{N} p_j^{r_j}\prod_{j=N+1}^{2N} p_j^{s_j}\;.$$ Any other explicit $\sigma$ will do the job. 

\section{Proof of Theorem \ref{maintheorem2}}\label{sec:proof-lemma2}

\begin{proof} 
First we will prove that $\valA$ is an achievable rate, that is: $C(\mathbf V_{{\mathcal{A}}})\geq\valA$. Then we will prove that $\valA$ is an upper bound on the mutual information rate and in consequence $C(\mathbf V_{{\mathcal{A}}})\leq\valA$.

Let $\delta>0$, then there exists some word $\word{w}$ such that $\val(\A,\word{w})\geq\valA-\delta$, furthermore let $|\word{w}|=m$. Consider the following protocol, the input into the control register is the deterministic sequence $(c_i)_{i=1}^\infty$ with 

\begin{align}
c_i &=\left\{ 
\begin{aligned}
        &\left. w_{i} \right. &\textrm{if } i-1 \mod m+n < m \\
        &\left. \rt \right. &\textrm{if } i-1 \mod m+n = m+n-1 \\
        &\left. \id \right. &\textrm{else}
\end{aligned}
\right.        
\label{eq:control}
\end{align} 

This choice induces a memoryless channel when regarded in blocks of $m+n$ uses of the channel. That is, every block of $m+n$ inputs into the data input encounters exactly the same noisy channel once the control input is fixed by Supplementary Equation \eqref{eq:control}. In consequence, given this particular control input, any mutual information between the input and the output over $m+n$ uses is an achievable rate (once normalized over the number of uses). For the data input, we choose the uniform distribution. 

The following chain of inequalities holds for the conditional entropy of the output given the input:
\begin{align}
&H\left(Y_{[1,{m+n}]}|X_{[1,{m+n}]}C_{[1,{m+n}]}\right)\nonumber\\
                        &=\sum_{i=1}^{m+n} H\left(Y_i|Y_{[1,i-1]}X_{[1,{m+n}]}C_{[1,{m+n}]}\right)\label{eq:chainrule}\\
                        &\leq m+H\left(Y_{[m+1,{m+n}]}|Y_{[1,m]}X_{[1,{m+n}]}C_{[1,{m+n}]}\right)\label{eq:boundbym}\\
                        &\leq m+H\left(Y_{[m+1,{m+n}]}|X_{[1,{m+n}]}C_{[1,{m+n}]}\right)\label{eq:removecond}\\
                        &\leq m + 1+ (1-\valA+\delta)n\label{eq:lasteq}
\end{align}
where Supplementary Equation \eqref{eq:chainrule} follows by the chain rule, the inequality Supplementary Equation \eqref{eq:boundbym} by bounding the entropy of the first $m$ uses by $m$, the inequality Supplementary Equation \eqref{eq:removecond} by removing the conditioning on $Y_{[1,m]}$ and Supplementary Equation \eqref{eq:lasteq} holds from bounding the conditional entropy by Supplementary Equation \eqref{eq:hym1n} that we prove below.

After the first $m$ uses, the automaton 
behaves like a noiseless channel with probability at least $\valA-\delta$ and like a completely random channel with the complementary probability. In consequence, we can bound the conditional entropy of the output of the uses $m+1$ to $m+n$ as follows:
\begin{align}
H\big(&Y_{[m+1,{m+n}]}|X_{[1,{m+n}]}C_{[1,{m+n}]}\big)\\                   
&\leq H\big((\valA-\delta+(1-\valA+\delta)2^{-n}) \phi\nonumber\\
                        &\qquad\qquad +(1-\valA+\delta)(1-2^{-n})\rho\big) \label{eq:iineqfirst} \\
                        &= h(\valA-\delta+(1-\valA+\delta)2^{-n})\nonumber\\
                        &\qquad\qquad+(1-\valA+\delta)(1-2^{-n})\log(2^{n}-1)\\
                        &\leq 1+ (1-\valA+\delta)n \label{eq:hym1n}
\end{align}
where $\phi=(1,0,\ldots,0)$ is a completely deterministic probability vector of length $2^{n}$, $\rho=\left(0,\frac{1}{2^{n}-1},\frac{1}{2^{n}-1},\ldots,\frac{1}{2^{n}-1}\right)$ the maximally entropic vector of length $2^{n}-1$ and $h(\epsilon):= - \epsilon \log \epsilon - (1-\epsilon) \log (1-\epsilon)$ is the binary entropy function. 

Now we can use Supplementary Equation \eqref{eq:lasteq} to bound the mutual information of the first $m+n$ uses:
\begin{align}
I(Y^{m+n}&;X^{m+n}C^{m+n})\\
 &=H(Y^{m+n})-H(Y^{m+n}|X^{m+n}C^{m+n})\\
 &=m+n-H(Y^{m+n}|X^{m+n}C^{m+n})\\
 &\geq n(\valA-\delta)-1 \label{eq:iineqlast}
\end{align}
Finally, by choosing $n$ larger than $(1+(\valA-2\delta)m)/\delta$ we get 
\begin{equation}
\frac{1}{m+n}I(Y^{m+n};X^{m+n}C^{m+n})\geq \valA-2\delta
\end{equation}
 That is, for all $\delta>0$ the rate $\valA-2\delta$ is achievable.

Now we will prove that $C(\mathbf V_{\mathcal A})$ is upper bounded by $\valA$.

Let $S_i$ denote the state of the PFA at use $i$, since the output only depends on the control input through the PFA state we have that $H(Y_i|X_iC_{[1,i-1]})\geq H(Y_i|X_iS_{i-1})$. In consequence, we can bound from below the conditional entropy of the output given the input as follows:
\begin{align}
H\big(Y^{n}&|X^{n}C^{n}\big) \nonumber\\
        &= \sum_{i=1}^n H\left(Y_i|Y_{[1,i-1]}X^{n}C^{n}\right)\label{eq:ineqfirst} \\
                                  &= \sum_{i=1}^n H\left(Y_i|Y_{[1,i-1]}X_iC_{[1,i-2]}\right) \\
                                  &\geq \sum_{i=1}^n H(Y_i|S_{i-1}X_i) \\
                                  &=  \sum_{i=1}^n p(S_{i-1}\in\F)H(Y_i|S_{i-1}\in\F,X_i=0)\nonumber\\
                                  &\qquad+p(S_{i-1}\notin\F)H(Y_i|S_{i-1}\notin\F,X_i=0)\\
                                  &\geq n(1-\valA)\label{eq:ineqfinal}
\end{align}

Finally, we can plug the bound on the conditional entropy to obtain the desired result:
\begin{align}
\frac{1}{n}I(Y^{n};X^{n}C^{n}) &= \frac{1}{n}(H(Y^n) - H(Y^{n}|X^{n}C^{n})) \\
                                  &\leq  1 -  \frac{1}{n}H(Y^{n}|X^{n}C^{n}) \\
                                  &\leq \valA
\end{align}

\end{proof}

\section{Proof of Corollary \ref{cor:infest}} \label{Appendix:inf-stable}
\begin{proof}
From the proof of Theorem \ref{maintheorem2} we know that $\forall \delta>0$, $\forall t\in\mathbb N$ there exists $n_t$ (wlog $n_{t+1}\ge n_t$) and $X^t=\{X_1,\ldots,X_{n_t}\}$ such that:
\begin{equation}
\frac{I(X^t;Y^t)}{n_t}\geq \valA -\frac{\delta}{2^t}
\end{equation}
We define the following source to input into the channel:
\begin{equation}
\mathbf V =\left\{\underbrace{X^1,\ldots,X^1}_{m_1\textrm{ times}},\ldots,\underbrace{X^t,\ldots,X^t}_{m_t\textrm{ times}},\ldots\right\}
\end{equation}
Each use of the channel is uniquely identified by a triple $(t,\alpha,\beta)$ with $t\in\mathbb N$, $\alpha\in[0,m_{t+1}-1]$ and $\beta\in[0,n_{t+1}]$ such that the triple corresponds with the use $n$-th with
\begin{equation}
\label{eq:tnrelation}
n=\sum_{i=1}^tm_in_i+\alpha n_{t+1}+\beta 
\end{equation}
and the sequence of random variables that is input over the first $n$ uses is 
\begin{align}
&V^n=\\&=\left\{\underbrace{X^1,\ldots,X^1}_{m_1\textrm{ times}},\ldots,\underbrace{X^t,\ldots,X^t}_{m_t\textrm{ times}},\underbrace{X^{t+1},\ldots,X^{t+1}}_{\alpha\textrm{ times}},X^{t+1}_{[1,\beta]}\right\}\nonumber
\end{align}
The sequence $\{m_i\}_{i=1}^\infty$ is chosen such that: 
\begin{equation}
\label{eq:boundiVn}
\frac{I\left(V^n;W^n\right)}{n}\geq \valA -\frac{\delta}{2^{t-1}}
\end{equation}
where $W^n$ is the random variable induced by $V^n$ at the output of the channel and $n$ is related to $t$ by Supplementary Equation \eqref{eq:tnrelation}. 

Let $C_n=\sup_{X^n}I(X^n;Y^n)/n$, Supplementary Equation \eqref{eq:boundiVn} implies that
\begin{equation}
	1\geq\mathbb E\left[\frac{i_{V^n,W^n}}{nC_n}\right]\geq\mathbb E\left[\frac{i_{V^n,W^n}}{n\valA}\right]\geq 1-\frac{\delta}{\valA2^{t-1}}
\end{equation}
Note that the input $V^n$ is composed of independent random variables, hence:
\begin{equation}\label{eq:iid-information-spectrum}
i_{V^n,W^n}=\sum_{i=1}^{t}\sum_{j=1}^{m_i}i_{X^i,Y^i}+\sum_{i=1}^{\alpha}i_{X^{t+1},Y^{t+1}}+i_{X^{t+1}_{[1,\beta]},Y^{t+1}_{[1,\beta]}}
\end{equation}
and also note that for all $t$ and $\beta\in[0,n_t]$:
\begin{equation}
\label{eq:boundsiXt}
0\leq i_{X^t_{[1,\beta]},Y^t_{[1,\beta]}}\leq \beta 
\end{equation}

In order to achieve Supplementary Equation \eqref{eq:boundiVn}, we see how $m_t$ can be chosen: 
\begin{align}
I(V^n;W^n)&=\sum_{i=1}^{t-1}m_i I(X^i;Y^i)+m_{t}I(X^t;Y^t)\nonumber\\
       &\quad+\alpha I(X^{t+1};Y^{t+1})+I(X^{t+1}_{[1,\beta]};Y^{t+1}_{[1,\beta]})\\
       &\geq \sum_{i=1}^{t}m_in_i\left(\valA-\frac{\delta}{2^{i}}\right)+\alpha n_{t+1}\left(\valA-\frac{\delta}{2^{t+1}}\right)
\end{align}
In order to verify Supplementary Equation \eqref{eq:boundiVn} it suffices to choose $m_t$ larger than
\begin{equation}
\label{eq:mt}
	\left\lceil\frac{2^{t}}{n_t\delta}\left(\sum_{i=1}^{t-1}m_in_i\delta\left(\frac{1}{2^{i}}-\frac{1}{2^{t-1}}\right)  +n_{t+1}\left(\valA-\frac{\delta}{2^{t-1}}\right)\right)\right\rceil
\end{equation}
such that the following holds
\begin{equation}
	\sum_{i=1}^{t}m_in_i\left(\valA-\frac{\delta}{2^{i}}\right)\geq\left(\sum_{i=1}^tm_in_i+n_{t+1}\right)\left(\valA-\frac{\delta}{2^{t-1}}\right)
\end{equation}
However, for technical reasons in the concentration bounds that follow we choose:
\begin{equation}
\label{eq:mtmax}
	m_t=\max\left\{\textrm{Suplementary Equation }\eqref{eq:mt},(n_{t+1})^2\right\}
\end{equation}

In the following we prove that $\forall\eta>0$:
\begin{equation}
\label{eq:mokuteki}
\lim_{n\rightarrow\infty}\textrm{Pr}\left[\left|\frac{i_{V^n,W^n}}{nC_n}-1\right|\geq\eta\delta\right]=0
\end{equation}
Let us expand the probability expression in Supplementary Equation \eqref{eq:mokuteki}:
\begin{align}
	\textrm{Pr}&\left[\left|\frac{i_{V^n,W^n}}{nC_n}-1\right|\geq\eta\delta\right]=\nonumber\\
      &=\textrm{Pr}\left[\frac{i_{V^n,W^n}}{nC_n}-1\geq\eta\delta\right]
 +\textrm{Pr}\left[\frac{i_{V^n,W^n}}{nC_n}-1\leq-\eta\delta\right]\\
 &\leq\textrm{Pr}\left[\frac{i_{V^n,W^n}}{nC_n}-\mathbb E\left[\frac{i_{V^n,W^n}}{nC_n}\right]\geq\eta\delta\right]\nonumber\\
	   &\quad+\textrm{Pr}\left[\frac{i_{V^n,W^n}}{nC_n}-\mathbb E\left[\frac{i_{V^n,W^n}}{nC_n}\right]\leq-\eta\delta+\frac{\delta}{\valA 2^{t-1}}\right]\\
    &\leq\textrm{Pr}\left[\left|i_{V^n,W^n}-\mathbb E\left[i_{V^n,W^n}\right]\right|\geq nC_n\delta\left(\eta-\frac{1}{\valA 2^{t-1}}\right)\right]\label{eq:sumI}
\end{align}
Now we will exploit that $i_{V^n,W^n}$ can be expressed as a sum of $l=\sum_{i=1}^tm_i+\alpha+1$ independent random variables (see Supplementary Equation \eqref{eq:tnrelation}). For these sums we can bound the two-tailed probability via Hoeffding's inequality \cite{Hoeffding_63}. More concretely, let $\{X_i\}_{i=1}^l$ be a sequence of $l$ independent random variables, let $t\geq 0$ and let $a_i\leq X_i\leq b_i$ then:
\begin{equation}
\textrm{Pr}\left[\left|\sum_{i=1}^lX_i-\mathbb E\left[\sum_{i=1}^lX_i\right]\right|\geq t\right]\leq 2 \exp\left(\frac{-2t^2}{\sum_{i=1}^l|b_i-a_i|^2}\right)
\end{equation}
We can make clearly the identifications with Supplementary Equation \eqref{eq:sumI} and Supplementary Equation \eqref{eq:iid-information-spectrum}. However, before applying Hoeffding's inequality let us bound the denominator in the exponential term:
\begin{align}
	\sum_{i=1}^l|b_i-a_i|^2&=\sum_{i=1}^{t}m_i(n_i)^2+\alpha(n_{t+1})^2+\beta^2\\
	   &\leq\sum_{i=1}^{t}m_in_in_{t+1}+\alpha n_{t+1}n_{t+1}+\beta n_{t+1}\\
	     &\leq n^{3/2}
\end{align}
The last inequality follows because from Supplementary Equation \eqref{eq:mtmax} we have that $(n_{t+1})^2\leq m_{t}\leq n$. Now if we apply Hoeffding's inequality to Supplementary Equation \eqref{eq:sumI} we can bound it from above by:
\begin{equation}
2\exp\left(-\frac{2\left(nC_n\delta\left(\eta-\frac{1}{\valA 2^{t-1}}\right)\right)^2}{n^{3/2}}\right)
\label{eq:almosthere}
\end{equation}
and in consequence the limit when $n$ goes to infinity is zero for all $\eta>0$.
\end{proof}

\section{Proof of Lemma \ref{mainlemma1}}\label{proofs}

Lemma \ref{mainlemma1} is essentially proven by Gimbert and Oualhadj in \cite{Gimbert_09} with a very elegant construction (a succinct sketch can be found in \cite{Gimbert_10}). We include a full proof here for completeness, to cover the case of an arbitrary $\lambda$ (in \cite{Gimbert_09} they only consider the case $\lambda=1$) and to include in the construction an undecidability result of Hirvensalo \cite{Hirvensalo_07}. This allows us to give the concrete estimates of alphabet size $10$ and $62$ states that appear in Lemma \ref{mainlemma1}.

\subsection{The construction of Gimbert and Oualhadj}
\label{subsec:gimbert}

\begin{figure}
\centering
\includegraphics[width=7cm]{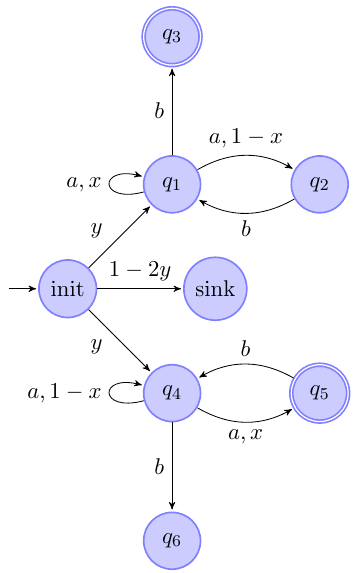}
\caption{The automaton $\D_{x,y}$ has value $2y$ if  $x>1/2$ and value $\le y$ if $x\le 1/2$.}\label{fig:ax}
\end{figure}
\begin{lemma}[Proposition 5 \cite{Gimbert_10}] 
\label{lem:gimbert5}
Let $\D_{x,y}$ be the automaton in \figurename~\ref{fig:ax} and $x\in[0,1]$, $y\in[0,1/2]$. $\D_{x,y}$ has value $2y$ if $x>1/2$ and value $\le y$ if $x\le 1/2$.
\end{lemma}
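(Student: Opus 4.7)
The plan is to read off the transition structure of $\D_{x,y}$ from Figure~\ref{fig:ax} and analyse $\val(\D_{x,y},\word{w})$ as a function of $x$, $y$, and the word $\word{w}$. The expected structure (typical of the Gimbert--Oualhadj gap gadget) is that $\D_{x,y}$ splits into a \emph{selector} stage controlled by $y$ and a \emph{verifier} subautomaton $\V_x$ controlled by $x$, with a closed-form decomposition of the acceptance probability of the shape
\begin{equation}
\val(\D_{x,y},\word{w}) \;=\; y + y\cdot p_{\V_x}(\word{w}),
\end{equation}
where $p_{\V_x}(\word{w}) \in [0,1]$ is the probability that $\V_x$ accepts $\word{w}$. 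First I would extract such a decomposition from the figure by direct matrix manipulation on the (few) states of $\D_{x,y}$.

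Given the decomposition, the lemma reduces to two regime claims for $\V_x$:
\begin{enumerate}
\item[(a)] If $x > 1/2$, then $\sup_{\word{w}} p_{\V_x}(\word{w}) = 1$, so $\valA$ applied to $\D_{x,y}$ is $2y$.
\item[(b)] If $x \le 1/2$, then $p_{\V_x}(\word{w}) = 0$ for every word, whence $\val(\D_{x,y},\word{w}) \le y$ uniformly.
\end{enumerate}

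For (b) I would exhibit a non-negative potential $\phi$ on the states of $\V_x$, with $\phi = 1$ at the accepting state and $\phi = 0$ at the initial state, that is a supermartingale under every input symbol when $x \le 1/2$, i.e.\ $X_w^{\top}\phi \le \phi$ coordinatewise for every letter $w$. Iterating along $\word{w}$ gives $p_{\V_x}(\word{w}) = \langle \phi, X_{\word{w}} v\rangle \le \langle \phi, v\rangle = 0$. For (a) I would construct explicit amplifying words: a basic pattern that, for $x > 1/2$, drags the state distribution toward the accepting state by an amount proportional to $x - 1/2$, iterated until $p_{\V_x}(\word{w})$ is within $\varepsilon$ of $1$; a standard random-walk/submartingale estimate makes this quantitative.

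The main obstacle is the sharp phase transition at $x = 1/2$: the same finite gadget must display strictly opposite asymptotic behaviour on either side of the threshold. The delicate part is therefore in (i) choosing the potential $\phi$ for direction (b) so that the supermartingale inequality holds for every input letter with equality at $x = 1/2$ and strictly below for $x < 1/2$, and in (ii) designing the amplifying words in (a) so that their acceptance probability provably tends to $1$ rather than saturating at some value below it. Once both ingredients are in place, the lemma follows immediately from the decomposition.
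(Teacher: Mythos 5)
There is a genuine gap, and it lies exactly where the difficulty of this lemma is. The gadget $\D_{x,y}$ is not a single ``verifier'' reached after a $y$-biased selector; it is two sub-automata run in parallel from $q_1$ and $q_4$, and the acceptance probability decomposes as $\val(\D_{x,y},\word{w})=y\,p[q_1\to q_3]+y\,p[q_4\to q_5]$ with $p[q_4\to q_5]\le 1-p[q_4\to q_6]$. On words $a^{n_1}b\cdots a^{n_t}b$ one has $p[q_1\to q_3]=1-\prod_i(1-x^{n_i})$ and $p[q_4\to q_6]=1-\prod_i(1-(1-x)^{n_i})$. So your claim (b) --- that the relevant acceptance probability is identically $0$ for $x\le 1/2$, certified by a supermartingale vanishing at the initial state --- is false of this automaton: both branches reach accepting states with positive probability for every $x\in(0,1)$. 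The upper bound $\le y$ for $x\le 1/2$ does not come from unreachability but from a trade-off: since $x\le 1-x$, any word that pushes $p[q_1\to q_3]$ up to $1-\epsilon$ necessarily pushes $p[q_4\to q_6]$ up to at least $1-\epsilon$ as well, so the two terms always sum to at most $1$. No linear potential argument of the kind you describe produces this; the inequality is between two word-dependent probabilities, not between a probability and a constant.

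Your claim (a) is where the proposal would actually fail if executed. To approach value $2y$ you must \emph{simultaneously} drive $p[q_1\to q_3]\to 1$ and keep $p[q_4\to q_6]\le\epsilon$. A ``drift proportional to $x-1/2$, iterated'' does not do this: repeating any fixed block $a^nb$ many times drives \emph{both} absorption probabilities to $1$ (both products $\prod(1-x^{n})$ and $\prod(1-(1-x)^{n})$ tend to $0$), which yields value $y$, not $2y$. The essential idea, absent from your plan, is to choose growing block lengths $n_i\approx\log_x(1/i)+C_\epsilon$ so that $\sum_i x^{n_i}$ diverges like the harmonic series (forcing $\prod_i(1-x^{n_i})\to 0$) while $\sum_i(1-x)^{n_i}=\sum_i x^{b n_i}\approx\sum_i i^{-b}$ converges (with $x^b=1-x$, $b>1$) and can be made $\le\epsilon$ via the bound $\zeta(b)\le b/(b-1)$. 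This divergence/convergence dichotomy is available precisely when $x>1-x$, and it --- not a local drift estimate --- is what creates the sharp transition at $x=1/2$. Without this construction the lemma's first assertion is not proved.
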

\begin{proof}
First, we need to make some observations regarding $\D_{x,y}$. If the input letter $b$ is fed two or more consecutive times the automaton is forced it into the states sink, $q_3$ and $q_6$ from which the automaton cannot exit. For any such a word, the acceptance value is $y$. Hence, we concentrate our attention to words of the form $a^{n_1}ba^{n_2}b\ldots ba^{n_t}b$. 
For any word $\word{w}$ of this form the acceptance value is:
\begin{align}
\val(\A,w)&=y\,p\left[q_1\xrightarrow{\,\word{w}\,} q_3\right]+y\,p\left[q_4\xrightarrow{\,\word{w}\,} q_5\right]\\
               &\leq y\,p\left[q_1\xrightarrow{\,\word{w}\,} q_3\right]+y\,\left(1-p\left[q_4\xrightarrow{\,\word{w}\,} q_6\right]\right)
\end{align}
Furthermore, the upper bound is reachable. To verify this, consider the word $\word{w}a^n$, $p\left[q_1\xrightarrow{\,\word{w}a^n\,} q_3\right]$ does not change and we can make $p\left[q_4\xrightarrow{\,\word{w}a^n\,} q_5\right]$ approach $1-p\left[q_4\xrightarrow{\,\word{w}a^n\,} q_6\right]$ by choosing $n$ large enough. Both $p\left[q_1\xrightarrow{\,\word{w}\,} q_3\right]$ and $p\left[q_4\xrightarrow{\,\word{w}\,} q_6\right]$ admit a very compact form:
\begin{align}
p\left[q_1\xrightarrow{\,\word{w}\,} q_3\right] &= 1-\prod_{i=1}^t (1-x^{n_i})\\
p\left[q_4\xrightarrow{\,\word{w}\,} q_6\right] &= 1-\prod_{i=1}^t (1-(1-x)^{n_i})
\end{align}

 Let us consider first $x\leq 1/2$. This implies that $x\leq1-x$ and in consequence 
\begin{equation}
1-\prod_{i=1}^t (1-x^{n_i})\leq 1-\prod_{i=1}^t (1-(1-x)^{n_i})\ .  
\end{equation}
Let $\epsilon>0$, for any word $\word{w}$ such that $p[q_1\rightarrow q_3]= 1-\epsilon$ we have $p[q_4\rightarrow q_6]\geq 1-\epsilon$  and $\val(\D_{x,y},\word{w}) \leq y$.

Let us assume now that $x>1/2$. We are going to prove that for any $\epsilon\in(0,x)$ there exists a word $\word{w}$ such that:
\begin{align}
p\left[q_4\xrightarrow{\,\word{w}\,}  q_6\right] &\leq \epsilon\label{eq:firstreq}\\
p\left[q_1\xrightarrow{\,\word{w}\,}  q_3\right] &\geq 1-\epsilon \label{eq:secondreq}
\end{align}
Consider the sequence of words $\{\word{w_k}\}_{k=2}^\infty$ where $\word{w_k}=a^{n_2}ba^{n_3}n\ldots ba^{n_k}$ and the lengths $n_2\ldots n_k$ are given by 
\begin{equation}
\label{eq:lengths}
n_k = \left\lceil \log_x\frac{1}{k}+C_\epsilon\right\rceil
\end{equation}
and
\begin{equation}
C_\epsilon=\frac{1}{b}\log_x\left(\frac{b-1}{b}\epsilon\right)\ .
\end{equation}
Let $b>1$ be a number such that $x^b=1-x$. The following sequence of inequalities holds:
\begin{align}
p\left[q_4\xrightarrow{\,\word{w_k}\,} q_6\right] &= (1-x)^{n_1}+(1-(1-x)^{n_2})(1-x)^{n_3}+\dots\nonumber\\
                                  &\quad +\prod_{i=2}^{k-1}(1-(1-x)^{n_i})(1-x)^{n_k}\\
                                  &\leq \sum_{i=2}^k (1-x)^{n_i}\\
                                  &= \sum_{i=2}^k x^{bn_i}\\
                                  &= \sum_{i=2}^k x^{b\lceil \log_x\frac{1}{i}+C_\epsilon\rceil}\\
                                  &\leq x^{bC_\epsilon}  \sum_{i=2}^k x^{b\log_x\frac{1}{i}}\\
                                  &=x^{bC_\epsilon}  \sum_{i=2}^k \frac{1}{i^b}\label{eq:rhs}
\end{align}
Note that the sum in the right hand side of Supplementary Equation \eqref{eq:rhs} when $k$ goes to infinity is very similar to the Riemann zeta function evaluated at a real argument strictly larger than one. For these arguments it is well known \cite{Jameson_03} that it can be bounded by
\begin{equation}
\zeta(b)=\sum_{n=1}^\infty \frac{1}{n^{b}}\leq \frac{b}{b-1}\ .
\end{equation} 
If we apply this bound to Supplementary Equation \eqref{eq:rhs} we obtain
\begin{align}                                  
\lim_{k\rightarrow\infty} p\left[q_4\xrightarrow{\,\word{w_k}\,} q_6\right] &\leq \lim_{k\rightarrow\infty} x^{bC_\epsilon}  \sum_{i=2}^k \frac{1}{i^b}\label{eq:riemzeta}\\
                                  &\leq x^{bC_\epsilon} \frac{b}{b-1}\\
                                  &=\epsilon\label{eq:eps}\ .
\end{align}
Furthermore, Supplementary Equation \eqref{eq:eps} remains an upper bound for finite $k$ since we are only dropping positive contributions. Hence, Supplementary Equation \eqref{eq:firstreq} is verified for all $k$. Let us now verify that there exists $k$ such that the  requirement Supplementary Equation \eqref{eq:secondreq} also holds. Consider the following sum
\begin{align}
\sum_{i=2}^k x^{n_i}&\geq \sum x^{\log_x\frac{1}{i}+C_\epsilon+1}\\
                     &= x^{C_\epsilon+1}\sum_{i=2}^k\frac{1}{i}
\end{align}
and this sum diverges for any non-zero $x$ and finite $C_\epsilon$. This implies that $\lim_{k\rightarrow\infty}\prod_{i=2}^k(1-x^{n_i})=0$ and that there exists a finite $k$ such that $\prod_{i=2}^k(1-x^{n_i})\leq\epsilon$. Then,  $p\left[q_1\xrightarrow{\,\word{w_k}\,} q_3\right] \geq 1-\epsilon$. 
\end{proof}

Now, we are going to modify $\D_{x,y}$. The main idea is that $x$ will be replaced by the probability that an automaton $\A$ accepts a word $\word{w}_\A$. This is achieved very easily, see \figurename~\ref{fig:dat}, once the state of $\D_{\A,y}$ reaches $\A$ it continues inside the automaton until it sees $c$ which is a symbol outside the input alphabet of $\A$. Then, it will transition to one of two different states depending on whether or not $\A$ is in an accepting state. We indicate the transitions from an accepting state by\includegraphics{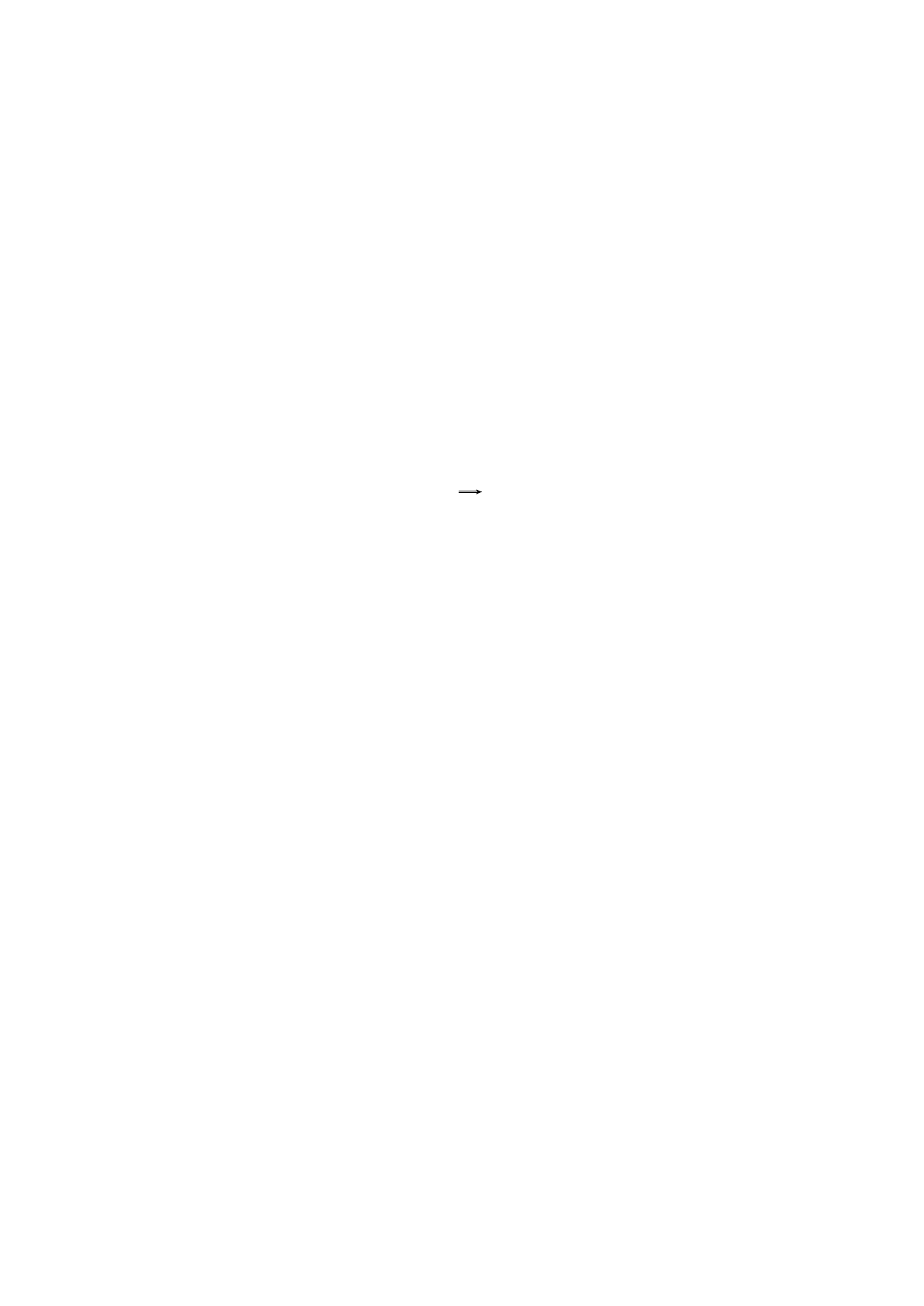} and the transitions from a non-accepting state by \includegraphics{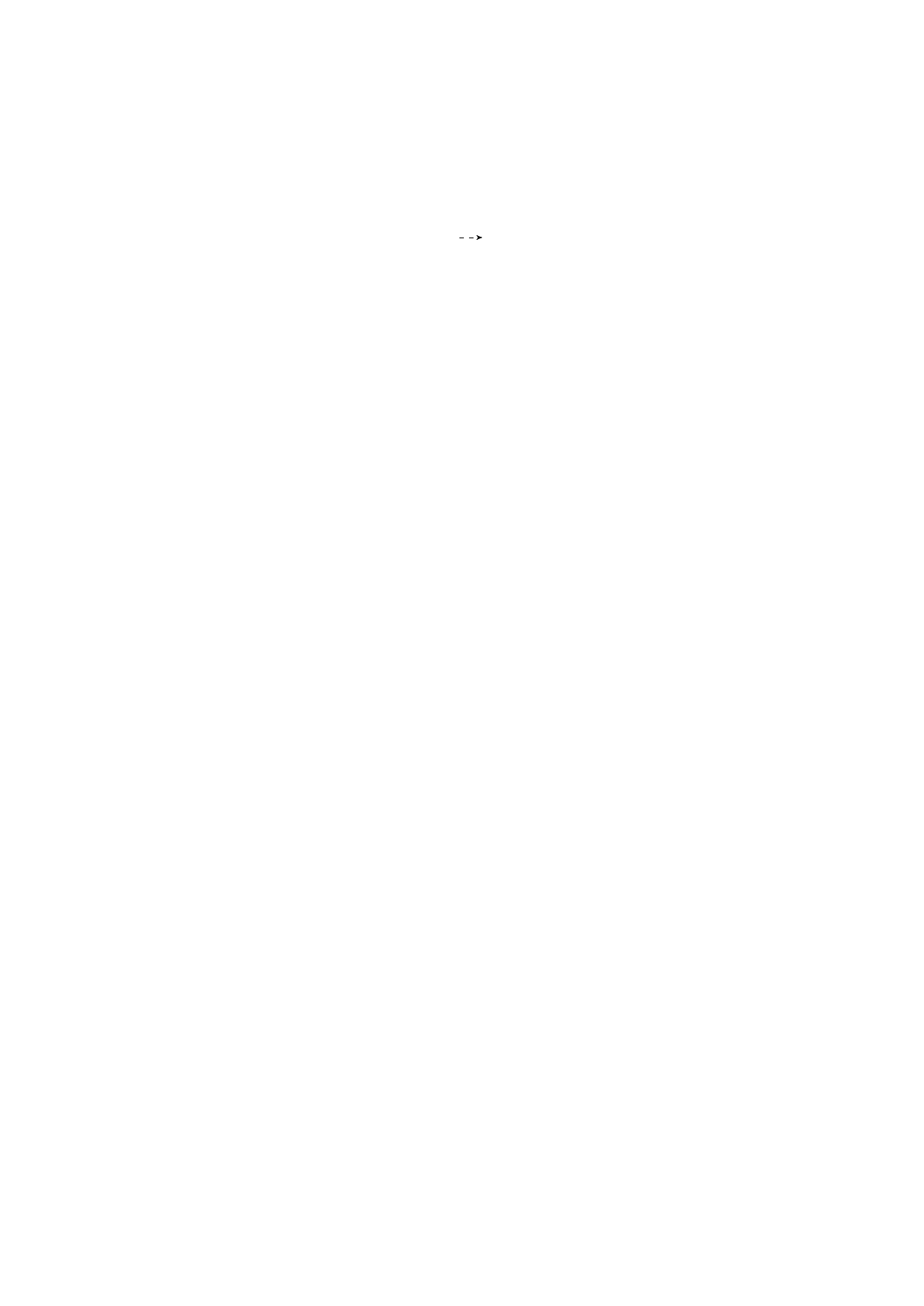}. Let $\word{w}_\A$ be an arbitrary input word into $\A$ then: 
\begin{align}
p\left[q_1\xrightarrow{a\word{w}_\A c}q_1\right]=\val(\A,\word{w}_\A) \\
p\left[q_4\xrightarrow{a\word{w}_\A c}q_5\right]=\val(\A,\word{w}_\A)
\end{align}

In the following we reduce the problem of finding the value of $\D_{\A,y}$ to the emptiness of the set $L_{\A>\lambda}$. This is the set of words with acceptance probability strictly higher than $\lambda$. That is: $L_{\A>\lambda}=\{\word{w} \in \mathcal W^*:\val(\A,\word{w})>\lambda\}$. 

\begin{figure}
\centering
\includegraphics[width=7.5cm]{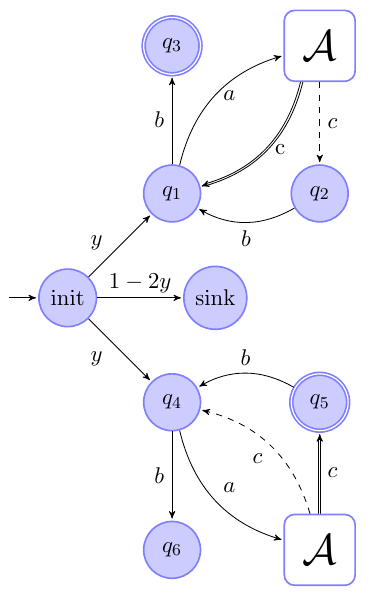}
\caption{The automaton $\D_{\A,y}$ has value $\le y$ if $L_{\A>1/2}$ is empty and value $\ge 2y$ if not.}\label{fig:dat}
\end{figure}

\begin{lemma}\label{lem:construction} Given a PFA $\mathcal{A}$ and $y\in [0,1/2]$, the automaton $\D_{\A,y}$ has value $\le y$ if $L_{\A>1/2}$ is empty and value $\ge 2y$ if not. 
\end{lemma}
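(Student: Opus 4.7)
The strategy is to lift the analysis of Lemma \ref{lem:gimbert5} to $\D_{\A,y}$ by viewing each block of the form $a\,\word{w}_\A\,c$ as playing the role of the letter $a$ in $\D_{x,y}$, with effective parameter $x = \val(\A,\word{w}_\A)$. Concretely, by inspection of Figure \ref{fig:dat} the only transitions leaving $q_1$ (resp.\ $q_4$) that do not fall into the sinks are triggered by completing such a block, and they occur with probability exactly $\val(\A,\word{w}_\A)$ (resp.\ $\val(\A,\word{w}_\A)$ to $q_5$), while any other symbol or a double $b$ sends the automaton irrevocably to states with acceptance contribution at most $y$. So, just as in Lemma \ref{lem:gimbert5}, I would restrict attention to inputs of the shape
\begin{equation*}
B_1\,b\,B_2\,b\,\cdots\,b\,B_t\,b,\qquad B_i = a\,\word{w}_{\A,i,1}\,c\;a\,\word{w}_{\A,i,2}\,c\;\cdots\;a\,\word{w}_{\A,i,n_i}\,c,
\end{equation*}
and set $\xi_i := \prod_{j=1}^{n_i}\val(\A,\word{w}_{\A,i,j})$ and $\eta_i := \prod_{j=1}^{n_i}\bigl(1-\val(\A,\word{w}_{\A,i,j})\bigr)$. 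The same computation as in Lemma \ref{lem:gimbert5} then yields
\begin{equation*}
p\bigl[q_1\xrightarrow{\,\cdot\,}q_3\bigr] = 1-\prod_{i=1}^{t}(1-\xi_i),\qquad p\bigl[q_4\xrightarrow{\,\cdot\,}q_6\bigr] = 1-\prod_{i=1}^{t}(1-\eta_i),
\end{equation*}
and an acceptance value bounded by $y\bigl(p[q_1\to q_3] + 1 - p[q_4\to q_6]\bigr)$, with equality attainable by appending sufficiently many trailing blocks (playing the role of the trailing $a^n$ in Lemma \ref{lem:gimbert5}).

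Case $L_{\A>1/2}=\emptyset$. Every word $\word{w}_\A$ satisfies $\val(\A,\word{w}_\A)\le 1/2$, hence factorwise $\val(\A,\word{w}_\A)\le 1-\val(\A,\word{w}_\A)$. This gives $\xi_i\le\eta_i$ for every block, so $\prod_i(1-\xi_i)\ge\prod_i(1-\eta_i)$, i.e.\ $p[q_1\to q_3]\le p[q_4\to q_6]$. Plugging this into the bound above gives $\val(\D_{\A,y},\word{w})\le y$ for every admissible $\word{w}$, hence $\val_{\D_{\A,y}}\le y$.

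Case $L_{\A>1/2}\ne\emptyset$. Fix any $\word{w}^\star\in L_{\A>1/2}$ and let $x^\star:=\val(\A,\word{w}^\star)>1/2$. Restrict to inputs in which every block $\word{w}_{\A,i,j}$ equals $\word{w}^\star$; then $\xi_i = (x^\star)^{n_i}$ and $\eta_i = (1-x^\star)^{n_i}$, and the restriction of $\D_{\A,y}$ to these inputs behaves exactly like $\D_{x^\star,y}$. Lemma \ref{lem:gimbert5} gives $\val_{\D_{x^\star,y}} = 2y$, so $\val_{\D_{\A,y}}\ge 2y$.

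The only real obstacle is the book-keeping in the first paragraph: one has to be confident that no pathological input can outperform the block structure described above, and that the two product formulas for $p[q_1\to q_3]$ and $p[q_4\to q_6]$ generalize correctly when the per-block parameters $x_{i,j}$ vary. Both facts reduce to the same monotonicity observation (if $x\le 1/2$ then $x\le 1-x$) used in the proof of Lemma \ref{lem:gimbert5}, so once the reduction to the canonical block form is established, the rest of the argument is a direct translation.
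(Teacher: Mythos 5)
Your proposal is correct and follows essentially the same route as the paper: the non-empty case reduces to Lemma \ref{lem:gimbert5} by treating each block $a\word{w}^\star c$ as the letter $a$ with effective parameter $x^\star=\val(\A,\word{w}^\star)>1/2$, and the empty case uses the monotonicity $\val(\A,\word{w})\le 1-\val(\A,\word{w})$ to get $p[q_1\to q_3]\le p[q_4\to q_6]$ and hence value at most $y$. Your treatment of the empty case is marginally more general than the paper's (you allow heterogeneous words within a block, where the paper writes each block as a power of a single word), but the argument is the same.
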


\begin{proof}
 Assume first that $L_{\A>1/2}$ is not empty. Then there exists some $\word{w}_\A$ such that $\val(\A,\word{w}_\A)>1/2$. 
Hence, we can construct the sequence $\word{w_k}=(a\word{w}_\A c)^{n_2}\ldots (a\word{w}_\A c)^{n_k}$ with the lengths $n_2\ldots n_k$ given by Supplementary Equation \eqref{eq:lengths}. Following the proof of Lemma \ref{lem:gimbert5} we have that for $\epsilon>0$ there exists $k$ such that $\word{w_k}$ verifies conditions Supplementary Equation \eqref{eq:firstreq} and Supplementary Equation \eqref{eq:secondreq}.

Assume now that $L_{\A>1/2}$ is empty. We can restrict our attention to words of the form $(a\word{w}_{\A}^1 c)^{n_1}b\ldots b(a\word{w}_\A^k c)^{n_k}b$. Furthermore for any word $\word{w}$ we have that $\val(\A,\word{w})\leq 1-\val(\A,\word{w})$ and in consequence
\begin{equation}
1-\prod_{i=1}^k\left(1-\val(\A,\word{w}_{\A}^i)^{n_i}\right)\leq 1-\prod_{i=1}^k\left(1-\left(1-\val(\A,\word{w}_{\A}^i)^{n_i}\right)\right)
\label{eq:last}
\end{equation}
Let $\epsilon>0$ Supplementary Equation \eqref{eq:last} implies that for any word such that $p\left[q_1\xrightarrow{\word{w}}q_4\right]=1-\epsilon$ we have that $p\left[q_4\xrightarrow{\word{w}} q_6\right]\geq 1-\epsilon$  and $\val(\D_{\A,y},\word{w}) \leq y$.
\end{proof}

Let us close this section by defining the family $\mathcal{T}_{\lambda}$ as 

\begin{align}\label{def-family-T}
\mathcal{T}_{\lambda}=&\left\{\gamma(\D_{\A,\lambda/2}):\text{$\A$ has a binary alphabet and 27 states}\right\}
\end{align}
with $\gamma$ as in Definition \ref{def-gamma}.

\subsection{The undecidability result of Hirvensalo}
\label{subsec:hirvensalo}

To use the above construction in order to prove Lemma \ref{mainlemma1}, up to the issue of restricting to \stable\ and resettable channels (that we will take care of below), it only remains to show that deciding whether $L_{\A>1/2}$ is empty or not is indeed undecidable. 
This problem, known as the emptiness problem, was proved undecidable in \cite{Ginsburg_66,Paz_71,Condon_89}. Recently, new proofs with explicit bounds in the number of states and the cardinality of the alphabet have been derived in \cite{Blondel_03,Hirvensalo_07,Gimbert_10} together with an undecidability proof of several related sets. Here, we will rely on  

\begin{theorem}[\cite{Hirvensalo_07}]
\label{th:emproblem}
Let $k$ be an integer equal or greater than 7 and $(n,m)$ be a duple of integers that is equal or pointwise larger than $(2,5k-10)$. The emptiness of $L_{\A>\delta}$, for  
$\delta=1/(5k-10)$ and PFAs with alphabet size $n$ and $m$ states, is undecidable. \end{theorem}

\begin{figure}
\centering
\begin{tabular}{cc}
\subfloat[$\B_p$]{\includegraphics[width=4cm]{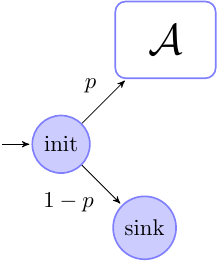}} 
   & \subfloat[$\C_p$]{\includegraphics[width=4cm]{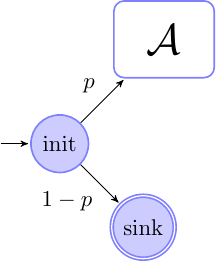}}
\end{tabular}
\caption{The automata $\B_p$ (left) and $\C_p$ (right) can be used to amplify the undecidability of the emptiness problem to arbitrary $\delta\in(0,1)$.}\label{fig:pfamulti}
\end{figure}

Taking Theorem \ref{th:emproblem} as a starting point, we can amplify the result and obtain undecidability for any rational $\delta\in(0,1)$ (in particular for $\delta=1/2$). 

\begin{corollary}\label{cor-Hirvensalo}
Fix any rational number $\delta$. The emptiness of $L_{\A>\delta}$ for PFAs with alphabet size $2$ and $27$ states is undecidable.
\end{corollary}

\begin{proof}
Given an arbitrary PFA $\A=(\Q,\W,\X,v,\F)$ and $p\in(0,1)$ we are going to construct two PFAs $\B_p$ and $\C_p$ such that: $L_{\A>\delta}$ is empty $\Leftrightarrow$ $L_{\B_p>p\delta}$ is empty $\Leftrightarrow$ $L_{\C_p>p\delta+1-p}$ is empty. 

Let us first construct $\B_p=(\T,\W,\Y,u,\F)$. The set of states is $\T=\{\Q\cup\textrm{init}\cup\textrm{sink}\}$. The input alphabet is equal to the original one. 
For any input symbol $x\in\W$ we define the stochastic matrices of $\B_p$ as follows:
\begin{equation}
Y_x=
\left(
\begin{tabular}{c c c|c c }
& & & &  0\\
& $X_x$ & & $p\, X_x\, v $&$\vdots$\\
& & & & 0\\
\hline
0&$\ldots$ &0& 0 & 0 \\
0&$\ldots$ &0& 1-p & 1
\end{tabular}\right)
\end{equation}
Note that we have added two rows and columns to track the two new states. Let us parse the action of the automaton as defined by the stochastic matrices. If it is in any of the original states, its behavior remains unchanged. If the automaton is in the sink state no matter what input symbol it reads the PFA remains in the sink state. Finally, if the automaton is in the init state upon reading the input symbol $x$ with probability $1-p$ it will transition to the sink state and with probability $p$ it will transition to whatever the original automaton would have transitioned from the initial distribution. In other words, the new distribution on the states will be given by $\left(p\, X_x\, v, 0, 1-p\right)$. The initial distribution of $\B_p$ has weight one on the init state, that is: $u=\left(0, \ldots, 0, 1, 0\right)$.

The construction of $\C_p$ is identical except that we add the sink state to the set of accepting states. We have depicted both constructions in \figurename~\ref{fig:pfamulti}.

For any input word $\word{w}\in\mathcal W^*$ we have that $\val(\A,\word{w})=p\,\val(\B_p,\word{w})=p\,\val(\C_p,\word{w})+1-p$. Hence, $L_{\A>\delta}$ is empty $\Leftrightarrow$ $L_{\B_p>p\delta}$ is empty $\Leftrightarrow$ $L_{\C_p>p\delta+1-p}$ is empty. 
\end{proof}

\subsection{Resettable and \stable\ channels}
\label{subsec:resandstab}
By the definition of the family $\mathcal{T}_\lambda$ given in (\ref{def-family-T}), Lemma \ref{mainlemma1} is just a consequence of Lemma \ref{lem:construction}, Corollary \ref{cor-Hirvensalo} and the following lemma, whose proof finishes the paper.

\begin{lemma}
\label{lem:reslemma}
$\val_{\tilde{\A}}= \val_{\gamma(\tilde{\A})}$ for all PFA $\tilde{A}$ of the form $D_{\A,y}$.
\end{lemma}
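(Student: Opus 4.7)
The plan is to prove both inequalities $\val_{\tilde{\A}} \le \val_{\gamma(\tilde{\A})}$ and $\val_{\gamma(\tilde{\A})} \le \val_{\tilde{\A}}$. The first is immediate: the alphabet $\W$ of $\tilde{\A}$ is contained in the extended alphabet $\W \cup \{\id,\rt\}$ of $\gamma(\tilde{\A})$, and for any word $\word{w}\in\W^*$ the stochastic matrices $X_w$ act identically in the two automata, so every word counted by $\val_{\tilde{\A}}$ also appears in the supremum defining $\val_{\gamma(\tilde{\A})}$.

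The substantive direction is the reverse inequality. The plan is to show that any word $\word{w}$ over the extended alphabet can be \emph{reduced} to a word $\word{w}'\in\W^*$ with $\val(\tilde{\A},\word{w}') = \val(\gamma(\tilde{\A}),\word{w})$; taking suprema then gives the bound. The reduction has two steps. First, if $\word{w}$ contains any occurrence of $\rt$, let $k$ be the position of its last occurrence. Applying \eqref{eq:reset}, after processing the prefix $w_1\cdots w_k$ the state distribution is exactly $v$, regardless of what the prefix was. Hence the acceptance value depends only on the suffix $w_{k+1}\cdots w_{|\word{w}|}$ applied to the initial distribution $v$, which is the same starting configuration as in $\tilde{\A}$. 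Second, on the remaining suffix, delete every occurrence of $\id$. By \eqref{eq:id}, the matrix $\mathcal X_{\id}$ acts as the identity on any probability vector, so deleting these letters does not change the state distribution at any point and therefore does not change the final acceptance probability. The resulting string $\word{w}'$ lies in $\W^*$ and satisfies $\val(\tilde{\A},\word{w}') = \val(\gamma(\tilde{\A}),\word{w})$.

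A small corner case to handle cleanly: $\word{w}'$ may be empty (e.g., $\word{w}$ consists entirely of symbols $\id$ and $\rt$, or ends with $\rt$ and contains no $\W$-letters afterward). In that case $\val(\gamma(\tilde{\A}),\word{w}) = \langle \pi_\F, v\rangle$, which equals the value of the empty word in $\tilde{\A}$, and this is dominated by $\val_{\tilde{\A}}$ since $\val_{\tilde{\A}}$ is a supremum over $\W^*$ which includes the empty word. Combining the two inequalities yields the desired equality; note that the argument does not use any particular structure of $\tilde{\A}$ beyond it being a PFA, so the hypothesis that $\tilde{\A}$ is of the form $\D_{\A,y}$ is only used insofar as this is the shape in which the lemma is invoked.

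I do not expect any real obstacle here: both steps of the reduction rely only on the defining identities \eqref{eq:id} and \eqref{eq:reset} of $\gamma$, and the only point requiring any care is the bookkeeping for the corner case where the reduced word is empty.
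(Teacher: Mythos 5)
Your proof is correct and uses essentially the same reduction as the paper: drop the prefix up to the last $\rt$, delete the $\id$ symbols, and handle the degenerate case where the reduced word is empty. Your observation that the argument works for \emph{any} PFA is also accurate; the paper's appeal to the disjointness of initial and accepting states in $\D_{\A,y}$ (to conclude that words ending in $\rt$ or consisting solely of $\id$'s have value zero) is not strictly needed, since the empty word is already in $\W^*$ and achieves the same value $\langle\pi_{\F},v\rangle$.
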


\begin{proof}
Given a  PFA $\A$, we define the set $\textrm{values}(\A)=\{\val(\A,\word{w})|\word{w}\in \W^*\}$. This is the set of achievable values or, alternatively, it can be regarded as the range of the function $\val(\A,\word{w})$ once the PFA $\A$ is fixed. It is then enough to show that $\textrm{values}(\tilde{\A})=\textrm{values}(\gamma(\tilde{\A}))$ for any PFA $\tilde{A}$ of the form $D_{\A,y}$. 

$\supseteq$

This direction is trivial since any input word $\word{w}$ of $\tilde{\A}$ is also an input word of $\gamma(\tilde{\A})$ and $\val(\gamma(\tilde{\A}),\word{w})=\val(\tilde{\A},\word{w})$. 

$\subseteq$

Let us divide the input words into two sets: $W_1$ the words that either end with the symbol $\rt$ or consist of a string of $\id$ and $W_2$ which is the complementary set, that is, words that have at least one symbol different than $\id$ and do not end with the $\rt$ symbol. The acceptance probability of any $\word{w}\in W_1$ is simply the acceptance probability of a distribution with unit probability on the initial symbol. Since for $\D_{\tilde{\A},y}$
the acceptance and initial symbols are disjoint, the value of $\word{w}$ is zero. That means that no word from $W_1$ can be in the set $\{\word{w}: \val(\gamma(\tilde{\A}),\word{w})\geq\lambda\}$ for any value of $\lambda\in(0,1]$.

First, consider any word $\word{w}\in W_2$ that contains at least one identity symbol, it can be written as $\word{w_1}\id \word{w_2}$ where $\word{w_1}$ and $\word{w_2}$ are two sequences of input symbols and at least one of both is non empty. We have that $\val(\gamma(\tilde{\A}),\word{w})=\val(\gamma(\tilde{\A}),\word{w_1}\word{w_2})$ and by applying this argument to all the identity symbols in the word we find a new word $\word{w'}$ with no identity symbols such that $\val(\gamma(\tilde{\A}),\word{w})=\val(\gamma(\tilde{\A}),\word{w'})$. Hence we can restrict our attention to words with no identity symbol. 

Second, we consider any word  $\word{w}\in W_2$ that contains at least one reset symbol, it can be written as $\word{w_1}\rt \word{w_2}$ where at least $\word{w_2}$ is non empty. We have that $\val(\gamma(\tilde{\A}),\word{w})=\val(\gamma(\tilde{\A}),\word{w_2})$, again we can apply this argument to all the reset symbols in the word and find a word $\word{w'}$ with no reset or identity symbols such that $\val(\gamma(\tilde{\A}),\word{w})=\val(\gamma(\tilde{\A}),\word{w'})=\val(\tilde{\A},\word{w'})$. 
\end{proof}


%

\end{document}